\RequirePackage{fix-cm}

\documentclass[smallextended]{svjour3}       

\usepackage[T1]{fontenc}
\usepackage[utf8]{inputenc}

\let\vec\relax 
\DeclareMathAccent{\vec}{\mathord}{letters}{"7E} 

\usepackage[sumlimits,intlimits]{mathtools}
\usepackage{newtxtext,newtxmath}

\smartqed  
\usepackage{graphicx}
\usepackage{amssymb}
\usepackage{amsfonts}
\usepackage{mathrsfs}
\usepackage{color}
\usepackage{cancel}
\usepackage{graphicx}
\usepackage{epstopdf}
\usepackage{cite,url}
\usepackage{upquote}
\usepackage{pifont}
\usepackage{comment}
\usepackage{appendix}

\usepackage{units}
\newcommand{\I}{\mathscr{I}}
\newcommand{\Ia}{\mathscr{I}_{\alpha}}
\newcommand{\X}{\mathbb{X}}

\makeatletter
\newcommand{\vast}{\bBigg@{4}}
\newcommand{\Vast}{\bBigg@{5}}
\makeatother
\renewcommand{\vec}[1]{\mathbf{#1}} 

\journalname{Information Geometry}

\begin{document}

\title{Cram\'er-Rao Lower Bounds Arising from Generalized Csisz\'ar Divergences
}


\author{M.~Ashok~Kumar 
\and Kumar~Vijay~Mishra
}


\institute{M. A. Kumar \at
              Department of Mathematics, Indian Institute of Technology Palakkad, 678557 India \\
             \email{ashokm@iitpkd.ac.in}
           \and
            K. V. Mishra \at
            United States Army Research Laboratory, Adelphi, MD 20783 USA\\
            \email{kumarvijay-mishra@uiowa.edu} 
}

\date{Received: date / Accepted: date}

\maketitle

\begin{abstract}
We study the geometry of probability distributions with respect to a generalized family of Csisz\'ar $f$-divergences. A member of this family is the relative $\alpha$-entropy which is also a R\'enyi analog of relative entropy in information theory and known as logarithmic or projective power divergence in statistics. We apply Eguchi's theory to derive the Fisher information metric and the dual affine connections arising from these generalized divergence functions. This enables us to arrive at a more widely applicable version of the Cram\'{e}r-Rao inequality, which provides a lower bound for the variance of an estimator for an escort of the underlying parametric probability distribution. We then extend the Amari-Nagaoka's dually flat structure of the exponential and mixer models to other distributions with respect to the aforementioned generalized metric. We show that these formulations lead us to find unbiased and efficient estimators for the escort model. Finally, we compare our work with prior results on generalized Cram\'er-Rao inequalities that were derived from non-information-geometric frameworks.

\keywords{Cram\'er-Rao lower bound \and Csisz\'ar $f$-divergence \and Fisher information metric \and escort distribution \and relative entropy}
\end{abstract}

\section{Introduction}
	\label{sec:introduction}
	The relative entropy or Kullback-Leibler divergence between two probability distributions of a random variable is regarded as a measure of ``distance" between them \cite{cover2012elements}. A quantity of fundamental importance in probability, statistics, and information theory, it establishes, \textit{inter alia}, the maximum entropy principle for decision making under uncertainty. In statistics, it is observed as the expected logarithm of the likelihood ratio. In information theory, it arises as the penalty in expected compressed length while using a wrong distribution for compression. 
	The relative entropy of a probability mass function (PMF) $p$ with respect to another PMF $q$ on an alphabet set, say $\X = \{0,1,2,\dots,M\}$ is defined as
	\begin{align}
	    \mathscr{I}_(p,q) := \sum_{x\in\mathbb{X}} p(x)\log\frac{p(x)}{q(x)}.
	\end{align}
	The Shannon entropy is defined as $H(p) := -\sum_{x\in\mathbb{X}} p(x)\log p(x)$. Relative entropy and Shannon entropy are related by $\mathscr{I}(p,u) = \log |\mathbb{X}| - H(p)$, where $u$ is the uniform distribution on $\mathbb{X}$. Throughout the paper, we shall assume that support of all probability distributions is $\mathbb{X}$.
	
	There are other measures of uncertainty that are used as alternatives to Shannon entropy. 
	One of these is the R\'{e}nyi entropy that was discovered by Alfred R\'enyi while attempting to find an axiomatic characterization to measures of uncertainty \cite{renyi1961measures}. Later, Campbell gave an operational meaning to R\'enyi entropy  \cite{1965xxIC_Cam}; he showed that R\'enyi entropy plays the role of Shannon entropy in a source coding problem where normalized cumulants of compressed lengths are considered instead of expected compressed lengths. Blumer and McEliece \cite{198809TIT_BluMcE} and Sundaresan \cite{200701TIT_Sun} studied the mismatched (source distribution) version of this problem and showed that relative $\alpha$-entropy plays the role of relative entropy in this problem. The R\'{e}nyi entropy of $p$ of order $\alpha$, $\alpha \ge 0$, $\alpha\neq 1$, is defined to be $H_{\alpha}(p) := \frac{1}{1-\alpha}\log\sum_xp(x)^{\alpha}$. Let us define {\em the relative $\alpha$-entropy} of $p$ with respect to $q$ as
	\begin{eqnarray}
	\label{eqn:alphadiv}
      \mathscr{I}_{\alpha}(p,q)	& := & \frac{1}{1-\alpha} \log \sum_x p(x) q(x)^{\alpha-1} - \frac{1}{\alpha(1-\alpha)}\log \sum_x p(x)^{\alpha} + \frac{1}{\alpha}\log \sum_x q(x)^{\alpha}.\nonumber\\
	\end{eqnarray}
	 It follows that, as $\alpha \rightarrow 1$, we have $\mathscr{I}_{\alpha}(p,q) \rightarrow \mathscr{I}(p,q)$ 
	 and $H_{\alpha}(p) \rightarrow H(p)$ \cite{kumar2015minimization-1}. R\'enyi entropy and relative $\alpha$-entropy are related by the equation $\mathscr{I}_\alpha(p,u) = \log |\mathbb{X}| - H_\alpha(p)$. 
	Relative $\alpha$-entropy is closely related to the Csisz\'ar $f$-divergence $D_{f}$ as
	\begin{equation}
    \label{eqn:alphadiv2}
    \mathscr{I}_{\alpha}(p,q) = \frac{1}{1-\alpha} \log\left[ \text{sgn}(1-\alpha) \cdot D_{f}(p^{(\alpha)},q^{(\alpha)}) + 1\right],
    \end{equation}
where	
    \begin{equation*}
    \label{eqn:the-function-in_D_f}
    p^{(\alpha)}(x) := \frac{p(x)^{\alpha}}{\sum_y {p(y)}^{\alpha}},\quad q^{(\alpha)}(x) := \frac{q(x)^{\alpha}}{\sum_y {q(y)}^{\alpha}}, \text{ and } f(u) = \text{sgn}(1-\alpha) \cdot (u^{{1}/{\alpha}} - 1), u \geq 0,
    \end{equation*}
	  [c.f. \cite[Sec.~II]{kumar2015minimization-1}]. The measures $p^{(\alpha)}$ and $q^{(\alpha)}$ are called {\em $\alpha$-escort} or {\em $\alpha$-scaled} measures \cite{1998xxPhyA_Tsa}, \cite{201802NCC_KarSun}. Observe from \ref{eqn:the-function-in_D_f} that relative $\alpha$-entropy is a monotone function of the Csisz\'ar divergence, not between $p$ and $q$, but their escorts $p^{(\alpha)}$ and $q^{(\alpha)}$.
	  
	  For a strictly convex function $f$ with $f(1) = 0$, the Csisz\'ar $f$-divergence between two probability distributions $p$ and $q$ is defined as (also, see \cite{1991xxTAS_Csi})
	  \begin{equation*}
	      D_f(p,q) = \sum_x q(x) f\left(\frac{p(x)}{q(x)}\right).
	  \end{equation*}
	  A simple derivation shows, indeed, that the right side of (\ref{eqn:alphadiv2}) is R\'enyi divergence between $p^{(\alpha)}$ and $q^{(\alpha)}$ of order ${1}/{\alpha}$ \cite{kumar2015minimization-2}. For an extensive study of properties of the R\'enyi divergence, we refer the reader to \cite{201407TIT_ErvHar}. Note that, even though the relative $\alpha$-entropy is connected to the Csisz\'ar divergence which, in turn, is linked to the Bregman divergence $B_f$ through
	  \[
	  D_f(p, q) = \sum_i p_i B_f({q_i}/{p_i},1)
	  \]
	  \cite{Zhang2004}, the relative $\alpha$-entropy is quite different from both Csisz\'ar and Bregman divergences because of the appearance of the escort distributions in (\ref{eqn:alphadiv2}).
	 
	 The ubiquity of R\'enyi entropy and relative $\alpha$-entropy in information theory was further noticed, for example, in guessing problems by Ar{\i}kan \cite{199601TIT_Ari}, Sundaresan \cite{200701TIT_Sun}, and Huleihel et al. \cite{Huleihel}; and in encoding of tasks by Bunte and Lapidoth \cite{2014xxarx_BunLap}. Relative $\alpha$-entropy arises in statistics as a generalized likelihood function robust to outliers \cite{2001xxBio_Jon_etal}, \cite{kumar2015minimization-2}. It has been referred variously as $\gamma$-divergence \cite{2008xxJma_FujEgu,2010xxEnt_CicAma,201402NC_NotKomEgu}, projective power divergence \cite{201109Ent_EguKomKat,201002Ent_EguKat}, and logarithmic density power divergence \cite{2011xxSIMDA_BasShiPar}. Throughout this paper, we shall follow the nomenclature of relative $\alpha$-entropy.
	
	Relative $\alpha$-entropy shares many interesting properties with relative entropy (see, e.g. \cite[Sec.~II]{kumar2015minimization-1} for a summary of its properties and relationships to other divergences). For instance, analogous to relative entropy, relative $\alpha$-entropy behaves like squared Euclidean distance and satisfies a Pythagorean property \cite{kumar2015minimization-1,kumar2018information}. The Pythagorean property proved useful in arriving at a computation scheme \cite{kumar2015minimization-2} for a robust estimation procedure \cite{2008xxJma_FujEgu}. This motivates us to explore relative $\alpha$-entropy from the differential geometric perspective. Eguchi \cite{1992xxHMJ_Egu} suggested a method of defining a {\em Riemannian metric}, and a pair of dual {\em affine connections}, on {\em statistical manifolds} from a general divergence function. 
	If we apply Eguchi's method with relative entropy as the divergence function, the resulting statistical manifold is the one with the Riemannian metric specified by the Fisher information matrix. Moreover, the resulting pair of dual affine connections are well-studied as exponential and mixture connections \cite{2000xxMIG_AmaNag}. 
    
The main contributions of this paper are the following.
\begin{enumerate}
    \item \textbf{Structure of statistical manifold for relative $\alpha$-entropy.} We apply Eguchi's theory to relative $\alpha$-entropy and come up with a generalized Fisher information metric and a pair of connections so that these form a dualistic structure on a statistical model. This coincides with the usual Fisher information metric and the exponential and mixer connections when $\alpha = 1$.
       \vspace{0.1cm}
    
    \item \textbf{The generalized CRLB and efficient estimators.} We derive the $\alpha$-version of the Cram\'er-Rao inequality by following the work of Amari and Nagoaka for relative $\alpha$-entropy. This helps us to find unbiased and efficient estimators for the escort distribution from estimators of the original distribution. In particular, if the escort distribution is exponential, we illustrate the procedure to deduce such estimators.
       \vspace{0.1cm}

    \item \textbf{Extension to generalized Csisz\'ar $f$-divergences.} We generalize the results of relative $\alpha$-entropy in 1) and 2) to a general form of Csisz\'ar $f$-divergences. This improves the applicability of these ideas to more general models.
    \vspace{0.1cm}

    \item \textbf{Counterexample of estimator-distribution duality.} The $\alpha$-power-law family was derived by minimizing relative $\alpha$-entropy subject to linear constraints on the underlying distribution. The $\alpha$-exponential family (comprises the generalized Gaussians as subclass in the continuous case) was derived by minimizing R\'enyi divergence subject to linear constraints on the escort of the underlying distribution. Since these divergences and the families are closely related by the mapping $p\mapsto p^{(\alpha)}$ \cite{201802NCC_KarSun}, one would expect these two families be dual to each other with respect to FIM or the $\alpha$-FIM. However, we show that this is not the case. 
   \vspace{0.1cm}
   
   \item \textbf{Connections with other generalized Cram\'er-Rao inequalities.} We interpret and differentiate our result in the context of similar generalized Cram\'er-Rao inequalities derived through non-information-geometric frameworks. For example, Furuichi \cite{furuichi2009} established a generalized CRLB for the so-called $q$-variance (that is, variance with respect to the escort distribution) by defining a $q$-Fisher information using the $q$-logarithmic functions well-known in non-extensive statistical physics literature. Lutwak et al. \cite{200501TIT_LutYanZha} (see also \cite{Lutwak2012}) proposed an extension of Fisher information in an attempt to extend Stam's inequality for the generalized Gaussians. They derived a generalized Cram\'{e}r-Rao inequality for which the generalized Gaussians are the extremal distributions. Along similar lines, Bercher \cite{Bercher2012,bercher2012generalized} derived a two parameter extension of Fisher information and a Cram\'{e}r-Rao inequality. From a statistical standpoint, Jan Naudts \cite{2004xxJIPAM_Jan} (see also \cite{2011xxGT_JanNaudts}), established a generalized CRLB for the variance of an estimator of the escort of the underlying distribution. Unlike these, we derive the generalized Fisher information and the Cramer-Rao inequality from the information geometric perspective of Amari and Nagoaka \cite{2000xxMIG_AmaNag} and Eguchi \cite{1992xxHMJ_Egu}.
    
\end{enumerate}

	The rest of the paper is organized as follows. In Section~\ref{sec:geometry_of_alpha_relative_entropy}, we provide a brief introduction to information geometry, explain Eguchi's theory of obtaining Riemannian metrics and dual affine connections from general divergence functions. 
	In Section~\ref{sec:analogous_cr_inequality}, we derive the $\alpha$-Cram\'er-Rao inequality and exploit it to establish a relation between the $\alpha$-power-law and $\alpha$-exponential families. Then, we apply this framework to generalized Csisz\'ar $f$-divergences in Section~\ref{sec:general_framework}. We discuss other generalizations of Cram\'{e}r-Rao inequality in Section~\ref{sec:disc} and conclude in Section~\ref{sec:summary}.

	\section{Information Geometry of the Relative $\alpha$-Entropy}
	\label{sec:geometry_of_alpha_relative_entropy}
	We now summarize the information geometric concepts associated with a general divergence function. For detailed mathematical definitions, we refer the reader to Amari and Nagoaka \cite{2000xxMIG_AmaNag}. For more intuitive explanations of information geometric notions, one may refer to Amari's recent book \cite{2016xxIGA_Ama}. Ay et al. \cite{2017xxMIG_AyJosLeSch} treats information geometry also from a functional analytic framework. We shall introduce the reader to a certain dualistic structure on a statistical manifold of probability distributions arising from a divergence function. We shall then specialize these ideas to the geometry associated with the relative $\alpha$-entropy.
	
	A {\em statistical manifold} is a parametric family of probability distributions on $\mathbb{X}$ with a ``continuously varying" parameter space. A statistical manifold $S$ is usually represented by $S = \{p_{\theta}: \theta = (\theta_1,\dots,\theta_n)\in \Theta\subset\mathbb{R}^n\}$. Here $\Theta$ is the parameter space. $\theta_1,\dots,\theta_n$ are the coordinates of the point $p$ in $S$ and the mapping $p \mapsto (\theta_1(p),\dots,\theta_n(p)$ that take a point $p$ to its coordinates constitute a {\em coordinate system}. The ``dimension" of the parameter space is the dimension of the manifold. For example, the set of all binomial probability distributions $\{B(k,\theta):\theta\in (0,1)\}$ is a one-dimensional statistical manifold. The {\em tangent space} at a point $p$ on a manifold $S$ (denoted $T_p(S)$) is a linear space that corresponds to the ``local linearization" of the manifold around the point $p$. The elements of a tangent space are called {\em tangent vectors}. For a coordinate system $\theta$, the (standard) basis vectors of a tangent space $T_p$ are denoted by $(\partial_i)_p := \left({\partial}/{\partial\theta_i}\right)_p, i=1,\dots,n$. A {\em (Riemannian) metric} at a point $p$ is an inner product defined between any two tangent vectors at that point. A metric is completely characterized by the matrix whose entries are the inner products between the basic tangent vectors. That is, it is characterized by the matrix
	\[
	G(\theta) = [g_{i,j}(\theta)]_{i,j = 1,\dots,n},
	\]
	where $g_{i,j}(\theta) := \langle \partial_i, \partial_j\rangle $. An {\em affine connection} (denoted $\nabla$) on a manifold is a correspondence between the tangent vectors at a point $p$ to the tangent vectors at a ``nearby" point $p'$ on the manifold. An affine connection is completely specified by specifying the $n^3$ real numbers $(\Gamma_{ij,k})_p, i,j,k=1,\dots,n$ called the {\em connection coefficients} associated with a coordinate system $\theta$.
	
	Let $D$ be a divergence function\footnote{A divergence function is a non-negative function $D$ on $S\times S$ satisfying $D(p,q) \ge 0$ with equality iff $p=q$.} on $S$. Let $D^*$ be another divergence function defined by $D^*(p,q) = D(q,p)$. Eguchi \cite{1992xxHMJ_Egu} showed that given an $n$-dimensional manifold $S = \{p_{\theta}\}$, with coordinate system $\theta = (\theta_1,\dots,\theta_n)$, and a (sufficiently smooth) divergence function $D$ on $S$, there is a  metric
	\[
	G^{(D)}(\theta) = \left[g_{i,j}^{(D)}(\theta)\right]
	\]
	with
	\begin{eqnarray*}
		g_{i,j}^{(D)}(\theta) & := & -D[\partial_i,\partial_j]\\
		& := & -\frac{\partial}{\partial\theta_i}\frac{\partial}{\partial\theta_j'}D(p_{\theta},p_{\theta'})\bigg|_{\theta=\theta'}
	\end{eqnarray*}
	where $\theta = (\theta_1,\dots,\theta_n)$, $\theta' = (\theta_1',\dots,\theta_n')$, and there are affine connections $\nabla^{(D)}$ and $\nabla^{(D^*)}$, with connection coefficients
	\begin{eqnarray*}
		\Gamma_{ij,k}^{(D)}(\theta) & := & -D[\partial_i\partial_j,\partial_k]\\
		& := & -\frac{\partial}{\partial\theta_i}\frac{\partial}{\partial\theta_j}\frac{\partial}{\partial\theta_k'}D(p_{\theta},p_{\theta'})\bigg|_{\theta=\theta'}
	\end{eqnarray*}
	and
	\begin{eqnarray*}
		\Gamma_{ij,k}^{(D^*)}(\theta) & := & -D[\partial_k,\partial_i\partial_j]\\
		& := & -\frac{\partial}{\partial\theta_k}\frac{\partial}{\partial\theta_i'}\frac{\partial}{\partial\theta_j'}D(p_{\theta},p_{\theta'})\bigg|_{\theta=\theta'},
	\end{eqnarray*}
	such that $\nabla^{(D)}$ and $\nabla^{(D^*)}$ are duals of each other with respect to the metric $G^{(D)}$ in the sense that
	\begin{eqnarray}
	\label{dualistic-structure}
	\partial_k g_{i,j}^{(D)}=\Gamma_{ki,j}^{(D)}+ \Gamma_{kj,i}^{(D^*)}.
	\end{eqnarray}
	When $D(p,q) = I(p \| q)$, the relative entropy, the resulting metric is the {\em Fisher metric} or the {\em  information metric} defined by the {\em Fisher information matrix} $G(\theta) = [g_{i,j}(\theta)]$ with
	\begin{eqnarray}
	\nonumber
	g_{i,j}(\theta)
	& = & \left. - \frac{\partial}{\partial \theta_i} \frac{\partial}{\partial \theta'_j} \sum_{x} p_{\theta}(x) \log \frac{p_{\theta}(x)}{p_{\theta'}(x)} \right|_{\theta' = \theta}\\
	\nonumber
	& = & \sum_x \partial_i p_{\theta}(x)\cdot \partial_j \log p _{\theta}(x)\\
	\nonumber
	& = & E_{\theta}[\partial_i \log p_{\theta}(X)\cdot \partial_j \log p_{\theta}(X)]\\
	\label{eqn:fisher-information-metric}
	& = & \text{Cov}_{\theta}[\partial_i \log p_{\theta}(X), \partial_j \log p_{\theta}(X)].
	\end{eqnarray}
	The last equality follows from the fact that the expectation of the score function is zero, that is, $E_{\theta}[\partial_i \log p_{\theta}(X)] = 0, i = 1, \dots, n$. The affine connection $\nabla^{(I)}$ is also called the {\em mixture connection} and is sometimes denoted $\nabla^{(m)}$ ($m$-connection). The affine connection $\nabla^{(I^*)}$ is also called the {\em exponential connection} and is sometimes denoted $\nabla^{(e)}$  ($e$-connection). The connection coefficients are 
	\begin{eqnarray*}
		\Gamma_{ij,k}^{(m)}(\theta)
		& = & \sum_x \partial_i \partial_j p_{\theta}(x)\cdot \partial_k \log p _{\theta}(x)
	\end{eqnarray*}
	for the $m$-connection and
	\begin{eqnarray*}
		\Gamma_{ij,k}^{(e)}(\theta)
		& = & \sum_x \partial_k p _{\theta}(x)\cdot \partial_i \partial_j \log p _{\theta}(x)
	\end{eqnarray*}
	for the $e$-connection (c.f., \cite[Sec. 3.2]{2000xxMIG_AmaNag}).
	
	Let us see what the Eguchi framework yields when we set $D = \Ia$. For simplicity, write $G^{(\alpha)}$ for $G^{(\Ia)}$. The Riemannian metric on $S$ is specified by the matrix $G^{(\alpha)}(\theta) = [g_{i,j}^{(\alpha)}(\theta)]$, where
	\begin{eqnarray}
	\lefteqn{g_{i,j}^{(\alpha)}(\theta) ~ := ~ g_{i,j}^{(\Ia)} } \nonumber\\
	& = & -\frac{\partial}{\partial\theta_j'}\frac{\partial}{\partial\theta_i}\mathscr{I}_{\alpha}(p_{\theta},p_{\theta'})\bigg|_{\theta' = \theta} \nonumber\\\label{eqn:alpha-metric}\\
    & = & \frac{1}{\alpha-1}\cdot\frac{\partial}{\partial\theta_j'}\frac{\partial}{\partial\theta_i} \left[\log \sum_y p_{\theta}(x) {p_{\theta'}(x)}^{\alpha-1}\right]_{\theta' = \theta}\\
	& = & \frac{1}{\alpha-1}\sum_x \partial_i p_{\theta}(x)\cdot \partial_j'\left[\frac{{p_{\theta'}(x)}^{\alpha-1}}{\sum_y p_{\theta}(y) {p_{\theta'}(y)}^{\alpha-1}}\right]_{\theta' = \theta}\\
    & = & \sum_x \partial_i p_{\theta}(x)\left[\frac{{p_{\theta}(x)}^{\alpha-2}\partial_j p_{\theta}(x) \sum_y p_{\theta}(y)^{\alpha} - p_{\theta}(x)^{\alpha-1}\sum_y p_{\theta}(y)^{\alpha-1}\partial_j p_\theta(y)}{(\sum_y p_{\theta}(y)^{\alpha})^2}\right]\\
	& = & E_{\theta^{(\alpha)}}[\partial_i (\log p_{\theta}(X))\cdot \partial_j (\log p_{\theta}(X))]\nonumber\\
	\label{eqn:g-alpha-expansion}
	& & \hspace{1.5cm}-E_{\theta^{(\alpha)}}[\partial_i \log p_{\theta}(X)]\cdot E_{\theta^{(\alpha)}}[\partial_j \log p_{\theta}(X)]\\\label{eqn:alpha-metric-covariance}
	& = & \text{Cov}_{\theta^{(\alpha)}}[\partial_i \log p_{\theta}(X), \partial_j \log p_{\theta}(X)]\\
	\label{eqn:RiemannianOnS-alpha}
	& = & \frac{1}{\alpha^2}\text{Cov}_{\theta^{(\alpha)}}[\partial_i \log p_{\theta}^{(\alpha)}(X), \partial_j \log p_{\theta}^{(\alpha)}(X)],
	\end{eqnarray}
	where $p_{\theta}^{(\alpha)}$ is the $\alpha$-escort distribution associated with $p_{\theta}$,
	\begin{equation}
	\label{eqn:escort_distribution}
	p_{\theta}^{(\alpha)}(x) := \frac{p_{\theta}(x)^{\alpha}}{\sum_y {p_{\theta}(y)}^{\alpha}},
	\end{equation}
	and $E_{\theta^{(\alpha)}}$ denotes expectation with respect to $p_{\theta}^{(\alpha)}$. The equality (\ref{eqn:RiemannianOnS-alpha}) follows because
	\begin{align*}
	\partial_i p_\theta^{(\alpha)}(x) = \partial_i\left(\frac{p_\theta(x)^\alpha}{\sum_y p_\theta(y)^\alpha}\right) = \alpha\left[\frac{{p_{\theta}^{(\alpha)}(x)}}{p_{\theta}(x)}\partial_i p_{\theta}(x) - p_{\theta}^{(\alpha)}(x) \sum_y \frac{{p_{\theta}^{(\alpha)}(y)}}{p_{\theta}(y)}\partial_i p_{\theta}(y)\right].
	\end{align*}
	
	\begin{remark}
		\label{rem:alpha_fisher}
		If we define $S^{(\alpha)} := \{p_\theta^{(\alpha)} : p_\theta\in S\}$, then (\ref{eqn:RiemannianOnS-alpha}) tells us that $G^{(\alpha)}$ is essentially the usual Fisher information for the model $S^{(\alpha)}$ up to the scale factor $\alpha$.
	\end{remark}
	 We shall call the metric defined by $G^{(\alpha)}$ an \emph{$\alpha$-information metric}. We shall assume that $G^{(\alpha)}$ is positive definite. One example case when this holds is the following.
	
	Consider the entire probability simplex $\mathcal{P}$ on $\X$. It can be parametrized by $\mathcal{P} = \{p_{\theta}\}$, $\theta = (\theta_1,\dots,\theta_n)\in \Theta,$  where
	\begin{eqnarray*}
		\Theta = \left\{(\theta_1,\dots,\theta_n) : \theta_i>0, \sum\limits_{i=1}^n\theta_i < 1\right\},
	\end{eqnarray*}
	with
	\begin{eqnarray}
	\label{parameterization_of_probability}
	p_{\theta}(x) =
	\begin{cases} \theta_x & \text{for } x = 1,\dots,n
	\\
	1-\sum\limits_{i=1}^n\theta_i &\text{for } x = 0.
	\end{cases}
	\end{eqnarray}
	This particular parametrization has been studied by Amari and Nagoaka \cite[Ex.~2.4]{2000xxMIG_AmaNag}.
For any row vector $c = (c_1,\dots,c_n)\in \mathbb{R}^n$, we have
	\begin{eqnarray*}
		cG^{(\alpha)}(\theta)c^t & = & \sum\limits_{i,j}c_i c_j g_{i,j}^{(\alpha)}(\theta)\\
			& = & \sum\limits_{i,j}c_i c_j E_{\theta^{(\alpha)}}[(\partial_i \log p_{\theta}-a_i)\cdot (\partial_j\log p_{\theta}-a_j)]\\
			& = & E_{\theta^{(\alpha)}}\left[\sum\limits_{i,j}c_i c_j (\partial_i \log p_{\theta}-a_i)\cdot (\partial_j\log p_{\theta}-a_j)\right]\\
			& = & E_{\theta^{(\alpha)}}\left[\left(\sum\limits_{i=1}^n c_i (\partial_i \log p_{\theta}-a_i)\right)^2\right],
	\end{eqnarray*}
		where $a_i = E_{\theta^{(\alpha)}}[\partial_i \log p_{\theta}]$. The right-hand side is always non-negative and is $0$ if and only if
		\begin{eqnarray}
		\label{an_equation}
		\sum\limits_{i=1}^n c_i \partial_i \log p_{\theta}(x) = b
		\end{eqnarray}
		for all $x\in \mathbb{X}$, where $b = \sum\limits_{i=1}^n c_i a_i$. Now, for any $x\in \mathbb{X}$,
		\begin{equation*}
		\partial_i \log p_{\theta}(x) =
		\begin{cases}
		-1/\left(1-\sum\limits_{i=1}^n\theta_i\right) & \text{if } x = 0\\
		1/\theta_i & \text{if } i = x\neq 0\\
		0 & \text{otherwise.}
		\end{cases}
		\end{equation*}
		Hence (\ref{an_equation}) holds if and only if
		$$\sum\limits_{i=1}^n c_i =-b \left(1-\sum\limits_{i=1}^n\theta_i\right), \quad c_1 = b\theta_1,\dots,~c_n = b\theta_n$$
		which is possible if and only if $c = 0$. Hence, the metric is positive definite with respect to this particular parameterization of the probability simplex.
\vspace{0.2cm}

	Let us now return to the general manifold $S$ with a coordinate system $\theta$. Denote $\nabla^{(\alpha)}:=\nabla^{(\I_{\alpha})}$ and $\nabla^{(\alpha)*}:=\nabla^{(\I_{\alpha}^*)}$ where the right-hand sides are as defined by Eguchi \cite{1992xxHMJ_Egu} with $D = \I$.
	
	Motivated by the expression for the Riemannian metric in (\ref{eqn:alpha-metric}), define
	\begin{equation}
	\label{alpha-partial}
	\partial_i^{(\alpha)}(p_\theta (x)) := \frac{1}{\alpha-1}\partial_i'\left(\frac{{p_{\theta'}(x)}^{\alpha-1}}{\sum_y p_{\theta}(y)\, {p_{\theta'}(y)}^{\alpha-1}}\right)\bigg |_{\theta'=\theta}.
	\end{equation}
	We now identify the corresponding connection coefficients as
	\begin{eqnarray}
	\label{eqn:connection_coefficients}
	\Gamma_{ij,k}^{(\alpha)}& := & \Gamma_{ij,k}^{(\I_{\alpha})}\\
	& = & - I_{\alpha}[\partial_i\partial_j,\partial_k] \nonumber\\
	& = & \frac{1}{\alpha-1} \left[\sum_x \partial_j p_{\theta}(x) \cdot \partial_i\left(\partial_k^{(\alpha)}(p_\theta)\right) + \sum_x \partial_i\partial_j p_{\theta}(x) \cdot \partial_k^{(\alpha)}(p_\theta)\right]
	\end{eqnarray}
	and
	\begin{eqnarray}
	\label{eqn:dual_connection_coefficients}
	\Gamma_{ij,k}^{(\alpha)*} & := & \Gamma_{ij,k}^{(\I_{\alpha}^*)}\\ & = & -I_{\alpha}[\partial_k,\partial_i\partial_j] \nonumber\\
	& = &\frac{1}{\alpha-1}\left[\sum_x \partial_k p_{\theta}(x) \cdot \partial_i'\partial_j'\left(\frac{{p_{\theta'}(x)}^{\alpha-1}}{\sum_y p_{\theta}(y){p_{\theta'}(y)}^{\alpha-1}}\right)\bigg |_{\theta'=\theta}\right].\nonumber\\
	\end{eqnarray}
	We also have (\ref{dualistic-structure}) specialized to our setting:
	\begin{eqnarray}
	\label{eqn:dual_connection}
	\partial_k g_{i,j}^{(\alpha)}=\Gamma_{ki,j}^{(\alpha)}+ \Gamma_{kj,i}^{(\alpha)*}.
	\end{eqnarray}
	$(G^{(\alpha)},\nabla^{(\alpha)},\nabla^{(\alpha)*})$ forms a dualistic structure on $S$. We shall call the connection $\nabla^{(\alpha)}$ with the connection coefficients $\Gamma_{ij,k}^{(\alpha)}$, an \emph{$\alpha$-connection}.
	
	Some remarks are in order. When $\alpha = 1$, the metric $G^{(\alpha)}(\theta)$ coincides with the usual Fisher metric and the connections $\nabla^{(\alpha)}$ and $\nabla^{(\alpha)*}$ coincide with the $m$-connection $\nabla^{(m)}$ and the $e$-connection $\nabla^{(e)}$, respectively.
	
	A comparison of the expressions in (\ref{eqn:fisher-information-metric}) and (\ref{eqn:RiemannianOnS-alpha}) suggests that 
	the manifold $S$ with the $\alpha$-information metric may be equivalent to the Riemannian metric specified by the Fisher information matrix on the manifold $S^{(\alpha)} := \{ p_{\theta}^{(\alpha)} : \theta \in \Theta \subset \mathbb{R}^n \}$. This is true to some extent because the Riemannian metric on $S^{(\alpha)}$ specified by the Fisher information matrix is simply $G^{(\alpha)}(\theta) = [g_{ij}^{(\alpha)}(\theta)]$. However, our calculations indicate that the $\alpha$-connection and its dual on $S$ are not the same as the $e$- and the $m$-connections on $S^{(\alpha)}$ except when $\alpha = 1$. The $\alpha$-connection and its dual should therefore be thought of as a parametric generalization of the $e$- and $m$-connections. In addition, the $\alpha$-connections in (\ref{eqn:connection_coefficients}) and (\ref{eqn:dual_connection_coefficients}) are different from the $\alpha$-connection of Amari and Nagaoka \cite{2000xxMIG_AmaNag}, which is a convex combination of the $e$- and $m$-connections.
	
	\section{An $\alpha$-Version of Cram\'{e}r-Rao Inequality}
	\label{sec:analogous_cr_inequality}
	We investigate the geometry of $\mathcal{P}$ with respect to the metric $G^{(\alpha)}$ and the affine connection $\nabla^{(\alpha)}$. Later, we formulate 
	an $\alpha$-equivalent version of the Cram\'{e}r-Rao inequality associated with a submanifold $S$.
	
	Note that $\mathcal{P}$ is an open subset of the affine subspace $\mathcal{A}_1 := \{A\in \mathbb{R}^{\mathbb{X}}:\sum\limits_x A(x) = 1\}$ and the tangent space at each $p \in \mathcal{P}$, $T_p(\mathcal{P})$ is the linear space
	\[
	\mathcal{A}_0 := \{A\in \mathbb{R}^{\mathbb{X}}:\sum\limits_x A(x) = 0\}.
	\]
	For every tangent vector $X\in T_p(\mathcal{P})$, let $X_p^{(e)}(x) := X(x)/p(x)$ at $p$ and call it the \emph{exponential representation of $X$ at $p$}. The collection of exponential representations is then
	\begin{eqnarray}
	\label{exponential_tangent_space}
	T_p^{(e)}(\mathcal{P}) := \{X_p^{(e)}:X\in T_p(\mathcal{P})\} = \{A\in \mathbb{R}^{\mathbb{X}}:E_p[A]=0\},\nonumber\hspace*{-1cm}\\
	\end{eqnarray}
	where the last equality is easy to check. Observe that (\ref{alpha-partial}) is 
	\begin{eqnarray}
	\label{eqn:alpha_representation}
	\partial_i^{(\alpha)}(p_\theta (x))	& = & \frac{1}{\alpha-1}\partial_i'\left(\frac{{p_{\theta'}(x)}^{\alpha-1}}{\sum_y p_{\theta}(y) {p_{\theta'}(y)}^{\alpha-1}}\right)\bigg |_{\theta'=\theta}\nonumber\\
	& = & \left[\frac{{p_{\theta}(x)}^{\alpha-2}~\partial_i p_{\theta}(x)}{\sum_y {p_{\theta}(y)}^{\alpha}} - \frac{{p_{\theta}(x)}^{\alpha-1}~\sum_y {p_{\theta}(y)}^{\alpha-1}\partial_i p_{\theta}(y)}{(\sum_y {p_{\theta}(y)}^{\alpha})^2}\right]\nonumber\\
	& = & \left[\frac{{p_{\theta}(x)}^{(\alpha)}}{p_{\theta}(x)}\partial_i(\log p_{\theta}(x)) - \frac{{p_{\theta}(x)}^{(\alpha)}}{p_{\theta}(x)}E_{\theta^{(\alpha)}}[\partial_i(\log p_{\theta}(X))]\right].
	\end{eqnarray}
	We shall call the above an \emph{$\alpha$-representation of $\partial_i$ at $p_\theta$}. With this notation, the $\alpha$-information metric is 
	\begin{equation*}
	g_{i,j}^{(\alpha)}(\theta) = \sum_x \partial_i p_{\theta}(x) \cdot \partial_j^{(\alpha)}(p_\theta(x)).
	\end{equation*}
	It should be noted that $E_{\theta}[\partial_i^{(\alpha)}(p_\theta(X))] = 0$. This follows since 
	\[
	\partial_i^{(\alpha)} (p_{\theta}) = \frac{p_{\theta}^{(\alpha)}}{p_{\theta}} \partial_i \log p_{\theta}^{(\alpha)}.
	\]
	When $\alpha =1$, the right hand side of (\ref{eqn:alpha_representation}) reduces to $\partial_i(\log p_{\theta})$.
	
	Motivated by (\ref{eqn:alpha_representation}), the \emph{$\alpha$-representation of a tangent vector $X$ at $p$} is 
	\begin{eqnarray}
	\label{eqn:alpha_rep_tgt_vec}
	X_p^{(\alpha)}(x)
	& := & \left[\frac{p^{(\alpha)}(x)}{p(x)}X_p^{(e)}(x) - \frac{p^{(\alpha)}(x)}{p(x)}E_{p^{(\alpha)}}[X_p^{(e)}]\right]\nonumber\\
	& = & \left[\frac{p^{(\alpha)}(x)}{p(x)}\left(X_p^{(e)}(x) - E_{p^{(\alpha)}}[X_p^{(e)}]\right)\right].
	\end{eqnarray}
	The collection of all such $\alpha$-representations is
	\begin{eqnarray}
	T_p^{(\alpha)}(\mathcal{P}) := \{X_p^{(\alpha)} : X\in T_p(\mathcal{P})\}.
	\end{eqnarray}
	Clearly $E_p[X_p^{(\alpha)}] = 0$. Also, since any $A\in \mathbb{R}^{\mathbb{X}}$ with $E_p[A]=0$ is 
	\begin{eqnarray*}
		A = \left[\frac{p^{(\alpha)}}{p}\left(B-E_{p^{(\alpha)}}[B]\right)\right]
	\end{eqnarray*}
	with $B = \tilde{B}-E_p[\tilde{B}],$ where
	\[
	\tilde{B}(x) := \left[\frac{p(x)}{p^{(\alpha)}(x)} A(x)\right].
	\]
	In view of (\ref{exponential_tangent_space}), we have
	\begin{eqnarray}
	\label{e_space_equalto_alpha_space}
	T_p^{(e)}(\mathcal{P}) = T_p^{(\alpha)}(\mathcal{P}).
	\end{eqnarray}
	Now the inner product between any two tangent vectors $X,Y\in T_p(\mathcal{P})$ defined by the $\alpha$-information metric in (\ref{eqn:alpha-metric}) is 
	\begin{eqnarray}
	\label{eqn:alpha_metric_general}
	\langle X,Y\rangle^{(\alpha)}_p := E_p[X^{(e)}Y^{(\alpha)}].
	\end{eqnarray}
	Consider now an $n$-dimensional statistical manifold $S$, a submanifold of $\mathcal{P}$, together with the metric $G^{(\alpha)}$ as in (\ref{eqn:alpha_metric_general}). Let $T_p^*(S)$ be the dual space (cotangent space) of the tangent space $T_p(S)$ and let us consider for each $Y\in T_p(S)$, the element $\omega_Y\in T_p^*(S)$ which maps $X$ to $\langle X,Y\rangle^{(\alpha)}$.  The correspondence $Y\mapsto \omega_Y$ is a linear map between $T_p(S)$ and $T_p^*(S)$. An inner product and a norm on $T_p^*(S)$ are naturally inherited from $T_p(S)$ by
	\[
	\langle \omega_X,\omega_Y\rangle_p := \langle X,Y\rangle^{(\alpha)}_p
	\]
	and
	\[
	\|\omega_X\|_p := \|X\|_p^{(\alpha)} = \sqrt{\langle X,X\rangle^{(\alpha)}_p}.
	\]
	Now, for a (smooth) real function $f$ on  $S$, the \emph{differential} of $f$ at $p$, $(\text{d}f)_p$, is a member of $T_p^*(S)$ which maps $X$ to $X(f)$. The \emph{gradient of $f$ at p} is the tangent vector corresponding to $(\text{d}f)_p$, hence, satisfies
	\begin{eqnarray}
	\label{eqn:differential_of_function}
	(\text{d}f)_p(X) = X(f) = \langle (\text{grad} f)_p,X\rangle_p^{(\alpha)},
	\end{eqnarray}
	and
	\begin{eqnarray}
	\label{eqn:norm_of_differential}
	\|(\text{d}f)_p\|_p^2 = \langle (\text{grad}f)_p,(\text{grad}f)_p\rangle_p^{(\alpha)}.
	\end{eqnarray}
	Since $\text{grad}f$ is a tangent vector, 
	\begin{equation}
	\label{eqn:grad-f}
	\text{grad}f = \sum\limits_{i=1}^n h_i \partial_i
	\end{equation}
	for some scalars $h_i$. Applying (\ref{eqn:differential_of_function}) with $X = \partial_j$, for each $j=1,\dots,n$, and using (\ref{eqn:grad-f}), we obtain
	\begin{eqnarray*}
		(\partial_j)(f)
		& = & \left\langle \sum\limits_{i=1}^n h_i \partial_i, \partial_j\right\rangle^{(\alpha)}\\
		& = & \sum\limits_{i=1}^n h_i \langle \partial_i, \partial_j\rangle^{(\alpha)}\\
		& = & \sum\limits_{i=1}^n h_i g_{i,j}^{(\alpha)}, \quad j = 1, \dots, n.
	\end{eqnarray*}
	This yields
	\[
	[h_1,\dots,h_n]^T = \left[G^{(\alpha)}\right]^{-1}[\partial_1(f),\dots,\partial_n(f)]^T,
	\]
	and so
	\begin{equation}
	\label{eqn:grad-coeff-equation}
	\text{grad}f = \sum\limits_{i,j} (g^{i,j})^{(\alpha)}\partial_j(f) \partial_i.
	\end{equation}
	From (\ref{eqn:differential_of_function}), (\ref{eqn:norm_of_differential}), and (\ref{eqn:grad-coeff-equation}), we get
	\begin{eqnarray}
	\label{differential_and_metric}
	\|(\text{d}f)_p\|_p^2 = \sum\limits_{i,j} (g^{i,j})^{(\alpha)}\partial_j(f) \partial_i(f)
	\end{eqnarray}
	where $(g^{i,j})^{(\alpha)}$ is the $(i,j)$th entry of the inverse of $G^{(\alpha)}$.
	
	With these preliminaries, we 
	now state our main results. These are analogous to those in \cite[Sec.~2.5]{2000xxMIG_AmaNag}.
	\begin{theorem}
	\label{thm:variance_and_norm_of_differential}
		Let $A:\mathbb{X}\to\mathbb{R}$ be any mapping (that is, a vector in $\mathbb{R}^{\mathbb{X}}$. Let $E[A]:\mathcal{P}\to \mathbb{R}$ be the mapping $p\mapsto E_p[A]$. We then have
		\begin{eqnarray}
		\label{eqn:variance_and_norm_of_differential}
		\text{Var}_{p^{(\alpha)}}\left[\frac{p}{p^{(\alpha)}}(A-E_p[A])\right] =  \|(\text{d}E_p[A])_p\|_p^2,
		\end{eqnarray}
		where the subscript $p^{(\alpha)}$ in Var means variance with respect to $p^{(\alpha)}$.
		$\hfill$
	\end{theorem}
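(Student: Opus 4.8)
The plan is to treat $f := E[A]$ as a smooth scalar field on $\mathcal{P}$ and to evaluate the right-hand side through the gradient, via $\|(\mathrm{d}f)_p\|_p^2 = \langle (\mathrm{grad}\,f)_p,(\mathrm{grad}\,f)_p\rangle_p^{(\alpha)}$ from (\ref{eqn:norm_of_differential}). First I would compute the differential on an arbitrary tangent vector $X\in T_p(\mathcal{P})$. Since $f(p)=\sum_x p(x)A(x)$ is affine in $p$, differentiating along $X$ (viewed as an element of $\mathcal{A}_0$) gives $X(f)=\sum_x X(x)A(x)=E_p[X^{(e)}A]$, and because $E_p[X^{(e)}]=0$ one may replace $A$ by $A-E_p[A]$ at will. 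This records $(\mathrm{d}f)_p$ purely in terms of the exponential representation.

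Next I would identify $\mathrm{grad}\,f$ from its defining property (\ref{eqn:differential_of_function}), namely $X(f)=\langle (\mathrm{grad}\,f)_p,X\rangle_p^{(\alpha)}$. Writing $Z:=(\mathrm{grad}\,f)_p$ and using the symmetry of the metric (it is a covariance form, cf.\ (\ref{eqn:alpha-metric-covariance})) together with the definition (\ref{eqn:alpha_metric_general}), the requirement becomes $E_p[X^{(e)}(Z^{(\alpha)}-A)]=0$ for every $X$ with $E_p[X^{(e)}]=0$. By (\ref{exponential_tangent_space}) the functions $X^{(e)}$ exhaust the orthogonal complement of the constants in $L^2(p)$, so $Z^{(\alpha)}-A$ must be constant; the normalization $E_p[Z^{(\alpha)}]=0$ noted after (\ref{eqn:alpha_rep_tgt_vec}) then forces $Z^{(\alpha)}=A-E_p[A]$.

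With the gradient pinned down, I would compute the squared norm by setting both arguments equal to $Z$ in (\ref{eqn:alpha_metric_general}) and inserting the explicit $\alpha$-representation (\ref{eqn:alpha_rep_tgt_vec}); the weight $p^{(\alpha)}/p$ converts the $p$-expectation into a $p^{(\alpha)}$-expectation, yielding $\langle Z,Z\rangle_p^{(\alpha)}=\mathrm{Var}_{p^{(\alpha)}}[Z^{(e)}]$. Finally, inverting (\ref{eqn:alpha_rep_tgt_vec}) gives $Z^{(e)}-E_{p^{(\alpha)}}[Z^{(e)}]=\frac{p}{p^{(\alpha)}}(A-E_p[A])$, and since variance is unaffected by subtracting a constant, $\mathrm{Var}_{p^{(\alpha)}}[Z^{(e)}]=\mathrm{Var}_{p^{(\alpha)}}\!\left[\frac{p}{p^{(\alpha)}}(A-E_p[A])\right]$, which is exactly the left-hand side of (\ref{eqn:variance_and_norm_of_differential}).

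The step I expect to be the main obstacle is the gradient identification: one must keep three objects attached to each tangent vector cleanly separated — the raw vector $X$, its exponential representation $X^{(e)}$, and its $\alpha$-representation $X^{(\alpha)}$ — and simultaneously exploit the symmetry of $\langle\cdot,\cdot\rangle^{(\alpha)}$ and the vanishing mean $E_p[X^{(\alpha)}]=0$. Once the conversion between $p$- and $p^{(\alpha)}$-expectations induced by the factor $p^{(\alpha)}/p$ is carried out carefully, the remaining steps reduce to the translation-invariance of the variance.
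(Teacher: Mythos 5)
Your proof is correct and takes essentially the same approach as the paper's: identify the gradient $Z$ of $E[A]$ by showing $Z_p^{(\alpha)} = A - E_p[A]$, then evaluate $\|(\mathrm{d}E[A])_p\|_p^2 = E_p[Z_p^{(e)}Z_p^{(\alpha)}]$ and convert it into a $p^{(\alpha)}$-variance by inverting the $\alpha$-representation. The only cosmetic difference is in the gradient step: the paper exhibits the gradient directly, using $T_p^{(e)}(\mathcal{P}) = T_p^{(\alpha)}(\mathcal{P})$ to produce $Y$ with $Y_p^{(\alpha)} = A - E_p[A]$, whereas you pin it down by an orthogonality-plus-uniqueness argument resting on the same two facts (that equality of representation spaces and $E_p[X_p^{(\alpha)}]=0$).
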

	\begin{proof}
		For any tangent vector $X\in T_p(\mathcal{P})$,
		\begin{eqnarray}
		\label{eqn:tangent_acting_on_expectation}
		X(E_p[A])
		& = & \sum\limits_x X(x)A(x)\nonumber\\
		& = & E_p[X_p^{(e)} \cdot A]\\
		& = & E_p[X_p^{(e)}(A-E_p[A])].
		\end{eqnarray}
		Since $A-E_p[A]\in T_p^{(\alpha)}(\mathcal{P})$ (c.f.~(\ref{e_space_equalto_alpha_space})), there exists $Y\in T_p(\mathcal{P})$ such that $A-E_p[A] = Y_p^{(\alpha)}$, and $\text{grad}(E[A]) = Y$. Hence we see that
		\begin{eqnarray*}
		\|(\text{d}E[A])_p\|_p^2 & = & E_p[Y_p^{(e)}Y_p^{(\alpha)}]\\
			& = &  E_p[Y_p^{(e)}(A-E_p[A])]\\
			& \stackrel{(a)}{=} &\displaystyle E_p\left[\left\{\frac{p(X)}{ p^{(\alpha)}(X)} (A-E_p[A]) + E_{p^{(\alpha)}}[Y_p^{(e)}]\right\}(A-E_p[A])\right]\\
			& \stackrel{(b)}{=} & E_p\left[\frac{p(X)}{p^{(\alpha)}(X)}(A-E_p[A])(A-E_p[A])\right]\\
			& = &  E_{p^{(\alpha)}}\left[\frac{p(X)}{p^{(\alpha)}(X)}(A-E_p[A])\frac{p(X)}{p^{(\alpha)}(X)}(A-E_p[A])\right]\\
			& = &  \text{Var}_{p^{(\alpha)}}\left[\frac{p(X)}{p^{(\alpha)}(X)}(A-E_p[A])\right],
		\end{eqnarray*}
		where the equality (a) is obtained by applying (\ref{eqn:alpha_rep_tgt_vec}) to $Y$ and (b) follows because $E_p[A-E_p[A]] = 0$.
		\begin{flushright}$\blacksquare$\end{flushright}
	\end{proof}
	
	\begin{corollary}
		\label{cor:variance_and_norm_of_differential_inequality}
		If $S$ is a submanifold of $\mathcal{P}$, then
		\begin{eqnarray}
		\label{variance_and_norm_of_differential}
		\text{Var}_{p^{(\alpha)}}\left[\frac{p(X)}{p^{(\alpha)}(X)}(A-E_p[A])\right] \ge \|(\text{d}E[A]|_{S})_p\|_p^2
		\end{eqnarray}
		with equality if and only if $$A-E_p[A]\in \{X_p^{(\alpha)} : X\in T_p(S)\} =: T_p^{(\alpha)}(S).$$ $\hfill$
	\end{corollary}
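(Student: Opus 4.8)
The plan is to deduce the corollary from Theorem~\ref{thm:variance_and_norm_of_differential} by a short orthogonal-projection argument. First I would apply the theorem to the full simplex $\mathcal{P}$: identity (\ref{eqn:variance_and_norm_of_differential}) rewrites the left-hand side of (\ref{variance_and_norm_of_differential}) as $\|(\text{d}E[A])_p\|_p^2$, the squared $\alpha$-norm of the differential of $E[A]$ regarded as a function on all of $\mathcal{P}$. The corollary then reduces to the comparison $\|(\text{d}E[A]|_{S})_p\|_p^2 \le \|(\text{d}E[A])_p\|_p^2$ between the differential of the restriction $E[A]|_{S}$, an element of $T_p^*(S)$, and the full differential.

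Second, I would observe that the differential of a restriction is the restriction of the differential, $(\text{d}E[A]|_{S})_p = (\text{d}E[A])_p\big|_{T_p(S)}$, since $T_p(S)\subseteq T_p(\mathcal{P})$. Writing $Y := (\text{grad}\,E[A])_p$ for the $\alpha$-gradient on $\mathcal{P}$, the defining relation (\ref{eqn:differential_of_function}) shows that the $\alpha$-gradient $Y_S$ of $E[A]|_{S}$ is characterized by $\langle Y_S, X\rangle_p^{(\alpha)} = X(E[A]) = \langle Y, X\rangle_p^{(\alpha)}$ for all $X\in T_p(S)$; equivalently $Y_S\in T_p(S)$ and $Y-Y_S \perp T_p(S)$, so $Y_S$ is exactly the orthogonal projection of $Y$ onto $T_p(S)$ with respect to the $\alpha$-information metric. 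The Pythagorean identity $\langle Y,Y\rangle_p^{(\alpha)} = \langle Y_S,Y_S\rangle_p^{(\alpha)} + \langle Y-Y_S, Y-Y_S\rangle_p^{(\alpha)}$ then gives $\|(\text{d}E[A]|_{S})_p\|_p^2 = \langle Y_S,Y_S\rangle_p^{(\alpha)} \le \langle Y,Y\rangle_p^{(\alpha)} = \|(\text{d}E[A])_p\|_p^2$, which is the desired inequality.

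Third, for the equality condition the same Pythagorean identity shows that equality forces $Y-Y_S=0$, i.e. $Y\in T_p(S)$. From the proof of Theorem~\ref{thm:variance_and_norm_of_differential}, $Y$ is the tangent vector with $A-E_p[A] = Y_p^{(\alpha)}$. Because the $\alpha$-representation map $X\mapsto X_p^{(\alpha)}$ of (\ref{eqn:alpha_rep_tgt_vec}) is a linear bijection (its inverse being the explicit construction of $\tilde{B}$ given just before (\ref{e_space_equalto_alpha_space})), the condition $Y\in T_p(S)$ is equivalent to $Y_p^{(\alpha)}\in T_p^{(\alpha)}(S)$, that is, $A-E_p[A]\in T_p^{(\alpha)}(S)$, which is precisely the stated criterion.

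The step needing the most care is the identification of $Y_S$ with the metric projection of $Y$: this presupposes that the restriction of $G^{(\alpha)}$ to $T_p(S)$ is positive definite, so that the gradient on $S$ exists and the projection is well defined. This follows from the standing assumption that $G^{(\alpha)}$ is positive definite. Once that is secured, the remainder is routine finite-dimensional linear algebra layered on top of the theorem.
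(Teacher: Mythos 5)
Your proposal is correct and follows essentially the same route as the paper: the paper's proof consists of the single observation that $(\text{grad}\,E[A]|_{S})_p$ is the orthogonal projection of $(\text{grad}\,E[A])_p$ onto $T_p(S)$, combined with Theorem~\ref{thm:variance_and_norm_of_differential}. Your write-up simply makes explicit the details the paper leaves implicit --- the Pythagorean identity for the projection, the injectivity of the $\alpha$-representation map needed for the equality criterion, and the positive-definiteness of $G^{(\alpha)}$ restricted to $T_p(S)$ --- all of which are sound.
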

	\begin{proof}
		Since $(\text{grad }E[A]|_{S})_p$ is the orthogonal projection of $(\text{grad }E[A])_p$ onto $T_p(S)$, the proof follows from Theorem~\ref{thm:variance_and_norm_of_differential}.\begin{flushright}$\blacksquare$\end{flushright}
	\end{proof}
	
	We use the aforementioned ideas to establish an $\alpha$-version of the Cram\'er-Rao inequality for the $\alpha$-escort of the underlying distribution. This gives a lower bound for the variance of the unbiased estimator $\hat{\theta}^{(\alpha)}$ in $S^{(\alpha)}$.
	
	\begin{theorem} [$\alpha$-version of Cram\'{e}r-Rao inequality]
	\label{thm:alpha_CRLB}
	Let $S = \{p_{\theta} : \theta = (\theta_1,\dots,\theta_m)\in\Theta\}$ be the given statistical model. Let $\hat{\theta}^{(\alpha)} = (\hat{\theta}^{(\alpha)}_1,\dots,\hat{\theta}^{(\alpha)}_m)$ be an unbiased estimator of $\theta = (\theta_1,\dots,\theta_m)$ for the statistical model $S^{(\alpha)} := \{p_\theta^{(\alpha)} : p_\theta\in S\}$. Then $\text{Var}_{\theta^{(\alpha)}}[\hat{\theta}^{(\alpha)}(X)] - [G^{(\alpha)}]^{-1}$ is positive semi-definite, where $\theta^{(\alpha)}$ denotes expectation with respect to $p_{\theta}^{(\alpha)}$. On the other hand, given an unbiased estimator $\hat{\theta} = (\hat{\theta}_1,\dots,\hat{\theta}_m)$ of $\theta$ for $S$, there exists an unbiased estimator $\hat{\theta}^{(\alpha)} = (\hat{\theta}^{(\alpha)}_1,\dots,\hat{\theta}^{(\alpha)}_m)$ of $\theta$ for $S^{(\alpha)}$ such that $\text{Var}_{\theta^{(\alpha)}}[\hat{\theta}^{(\alpha)}(X)] - [G^{(\alpha)}]^{-1}$ is positive semi-definite..
	\end{theorem}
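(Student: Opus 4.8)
The plan is to mirror the derivation of the matrix Cram\'er-Rao bound in \cite[Sec.~2.5]{2000xxMIG_AmaNag}, but carried out inside the tangent space $T_p(\mathcal{P})$ equipped with the $\alpha$-information metric $\langle\cdot,\cdot\rangle^{(\alpha)}$ of (\ref{eqn:alpha_metric_general}). I would fix a point $p=p_\theta\in S$ and argue infinitesimally there, so that every object below is a fixed function on $\mathbb{X}$ and no parameter-dependence enters. First I would record the covariance form of the metric: substituting the $\alpha$-representation (\ref{eqn:alpha_rep_tgt_vec}) into (\ref{eqn:alpha_metric_general}) gives, for tangent vectors $X,Y$, the identity $\langle X,Y\rangle^{(\alpha)}=\text{Cov}_{\theta^{(\alpha)}}[X_p^{(e)},Y_p^{(e)}]$ by a one-line computation (this extends (\ref{eqn:alpha-metric-covariance}) from the coordinate vectors to arbitrary $X,Y$). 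This is what converts variances with respect to $p^{(\alpha)}$ into inner products of tangent vectors.

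For the first (lower-bound) assertion, set $R_i:=\hat{\theta}^{(\alpha)}_i-\theta_i$, so unbiasedness for $S^{(\alpha)}$ reads $E_{\theta^{(\alpha)}}[R_i]=0$ and $\text{Var}_{\theta^{(\alpha)}}[\hat{\theta}^{(\alpha)}]=[\text{Cov}_{\theta^{(\alpha)}}(R_i,R_j)]$. The key step is to differentiate $\sum_x p_\theta^{(\alpha)}(x)\hat{\theta}^{(\alpha)}_i(x)=\theta_i$ in $\theta_j$; using $\partial_j p_\theta^{(\alpha)}=p_\theta\,\partial_j^{(\alpha)}(p_\theta)$ (which follows from $\partial_j p_\theta^{(\alpha)}=p_\theta^{(\alpha)}\partial_j\log p_\theta^{(\alpha)}$ and the relation $\partial_j^{(\alpha)}(p_\theta)=\tfrac{p_\theta^{(\alpha)}}{p_\theta}\partial_j\log p_\theta^{(\alpha)}$ noted after (\ref{eqn:alpha_representation})) together with $E_\theta[\partial_j^{(\alpha)}(p_\theta)]=0$, this collapses to $E_\theta[\partial_j^{(\alpha)}(p_\theta)\,R_i]=\delta_{ij}$. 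Since $E_\theta[R_i-E_\theta[R_i]]=0$, the centered residual $R_i-E_\theta[R_i]$ lies in $T_p^{(e)}(\mathcal{P})$, so by (\ref{e_space_equalto_alpha_space}) it equals $\zeta_i^{(e)}$ for a unique $\zeta_i\in T_p(\mathcal{P})$. The previous identity then reads $\langle\zeta_i,\partial_j\rangle^{(\alpha)}=\delta_{ij}$, and the covariance form of the metric gives $\langle\zeta_i,\zeta_j\rangle^{(\alpha)}=\text{Cov}_{\theta^{(\alpha)}}(R_i,R_j)$, i.e. the covariance matrix is exactly the Gram matrix of the $\zeta_i$.

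Now I would decompose $\zeta_i=\zeta_i^{\parallel}+\zeta_i^{\perp}$ into its orthogonal projection onto $T_p(S)=\text{span}\{\partial_1,\dots,\partial_m\}$ and the complementary part, with respect to $\langle\cdot,\cdot\rangle^{(\alpha)}$. The relations $\langle\zeta_i,\partial_j\rangle^{(\alpha)}=\delta_{ij}$ force $\zeta_i^{\parallel}=\sum_k (g^{ik})^{(\alpha)}\partial_k$, whence $[\langle\zeta_i^{\parallel},\zeta_j^{\parallel}\rangle^{(\alpha)}]=[G^{(\alpha)}]^{-1}$. By orthogonality $\text{Cov}_{\theta^{(\alpha)}}(R_i,R_j)=\langle\zeta_i^{\parallel},\zeta_j^{\parallel}\rangle^{(\alpha)}+\langle\zeta_i^{\perp},\zeta_j^{\perp}\rangle^{(\alpha)}$, so that $\text{Var}_{\theta^{(\alpha)}}[\hat{\theta}^{(\alpha)}]-[G^{(\alpha)}]^{-1}=[\langle\zeta_i^{\perp},\zeta_j^{\perp}\rangle^{(\alpha)}]$ is a Gram matrix, hence positive semi-definite. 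This is precisely the matrix form of Corollary~\ref{cor:variance_and_norm_of_differential_inequality}, whose equality case $\zeta_i^{\perp}=0$ characterizes efficiency.

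For the second assertion I would exhibit the estimator explicitly: given $\hat{\theta}$ unbiased for $S$, set $\hat{\theta}^{(\alpha)}_i:=\tfrac{p_\theta}{p_\theta^{(\alpha)}}\big(\hat{\theta}_i-\theta_i\big)+\theta_i$. Using $E_{\theta^{(\alpha)}}[\tfrac{p_\theta}{p_\theta^{(\alpha)}}g]=E_\theta[g]$ for any $g$ and $E_\theta[\hat{\theta}_i]=\theta_i$, a direct check gives $E_{\theta^{(\alpha)}}[\hat{\theta}^{(\alpha)}_i]=\theta_i$, so $\hat{\theta}^{(\alpha)}$ is unbiased for $S^{(\alpha)}$ and the positive semi-definiteness follows from the first part. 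I expect the main obstacle to be bookkeeping rather than any single hard estimate: one must keep scrupulous track of which expectation ($E_\theta$ versus $E_{\theta^{(\alpha)}}$) each centering uses, and invoke $E_\theta[\partial_j^{(\alpha)}(p_\theta)]=0$ and $\sum_x\partial_j p_\theta^{(\alpha)}(x)=0$ at exactly the right places, since it is these facts that make the cross term $\langle\zeta_i,\partial_j\rangle^{(\alpha)}$ collapse to $\delta_{ij}$ and the bound emerge as $[G^{(\alpha)}]^{-1}$ without stray escort factors.
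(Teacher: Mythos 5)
Your proposal is correct, and it proves exactly the stated matrix inequality, but it takes a genuinely different route from the paper's. The paper first transfers the given estimator to the base model via $\hat{\theta}_i := \bigl(p_\theta^{(\alpha)}/p_\theta\bigr)\hat{\theta}_i^{(\alpha)}$ (its (\ref{unbiased_estimator_for_S})), sets $A=\sum_i c_i\hat{\theta}_i$, and then invokes its general machinery for functions on $\mathcal{P}$ --- Theorem~\ref{thm:variance_and_norm_of_differential}, Corollary~\ref{cor:variance_and_norm_of_differential_inequality} and (\ref{differential_and_metric}) --- to get the scalar bound (\ref{eq:analogous_cramer_rao_inequality}) for each $c$; there the key identification is $A-E_p[A]=Y_p^{(\alpha)}$, i.e.\ the residual is read as an \emph{$\alpha$-representation}. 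You never use those two results or the transfer: you stay on the escort side, differentiate the unbiasedness identity $\sum_x p_\theta^{(\alpha)}(x)\hat{\theta}_i^{(\alpha)}(x)=\theta_i$, read the centered residuals $R_i-E_\theta[R_i]$ as \emph{$e$-representations} $\zeta_i^{(e)}$ (the dual identification to the paper's), and finish with an orthogonal-projection/Gram-matrix argument that yields the matrix statement in one stroke rather than scalar inequalities for every $c$. Each route buys something: the paper's reuses lemmas valid for an arbitrary function $A$ and carries along the efficiency (equality) characterization of Corollary~\ref{cor:variance_and_norm_of_differential_inequality}; yours is self-contained and, notably, bounds precisely $\text{Var}_{\theta^{(\alpha)}}[\hat{\theta}^{(\alpha)}]$, since $\langle\zeta_i,\zeta_j\rangle^{(\alpha)}=\text{Cov}_{\theta^{(\alpha)}}(R_i,R_j)$ exactly, whereas in the paper's proof Corollary~\ref{cor:variance_and_norm_of_differential_inequality} literally delivers the escort-variance of $\sum_i c_i\bigl(\hat{\theta}_i^{(\alpha)}-\theta_i\,p_\theta/p_\theta^{(\alpha)}\bigr)$, and its identification with $c\,\text{Var}_{\theta^{(\alpha)}}[\hat{\theta}^{(\alpha)}]\,c^t$ is left implicit. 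For the converse, your centered construction $\hat{\theta}_i^{(\alpha)}=\bigl(p_\theta/p_\theta^{(\alpha)}\bigr)\bigl(\hat{\theta}_i-\theta_i\bigr)+\theta_i$ differs from the paper's uncentered (\ref{unbiased_estimator_alpha_version}) but is equally valid; note that both constructions produce a $\theta$-dependent quantity and then appeal to the first part, which strictly presumes a statistic unbiased identically in $\theta$ --- a looseness you inherit from the paper rather than introduce.
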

	
	\begin{proof}
		Given an unbiased estimator $\hat{\theta}^{(\alpha)} = (\hat{\theta}^{(\alpha)}_1,\dots,\hat{\theta}^{(\alpha)}_m)$ of $\theta = (\theta_1,\dots,\theta_m)$ for the statistical model $S^{(\alpha)}$, let
	\begin{eqnarray}
	\label{unbiased_estimator_for_S}
	\hat{\theta_i}(X) := \frac{p_{\theta}^{(\alpha)}(X)}{p_{\theta}(X)}\hat{\theta}_i^{(\alpha)}(X).
	\end{eqnarray}
	It is easy to check that $\hat{\theta}$ is an unbiased estimator of $\theta$ for $S$. Hence, if we let $A = \sum\limits_{i=1}^m c_i \hat{\theta_i}$, for $c = (c_1,\dots,c_m)\in \mathbb{R}^m$, then from (\ref{variance_and_norm_of_differential}) and (\ref{differential_and_metric}), we have
	\begin{eqnarray}
	\label{eq:analogous_cramer_rao_inequality}
	c\text{Var}_{\theta^{(\alpha)}}[\hat{\theta}^{(\alpha)}(X)]c^t \ge c[G^{(\alpha)}]^{-1}c^t.
	\end{eqnarray}
	This proves the first part.
	
	For the converse, consider an unbiased estimator $\hat{\theta} = (\hat{\theta}_1,\dots,\hat{\theta}_m)$ of $\theta$ for $S$. Let
	\begin{eqnarray}
	\label{unbiased_estimator_alpha_version}
	\hat{\theta}_i^{(\alpha)}(X) := \frac{p_{\theta}(X)}{p_{\theta}^{(\alpha)}(X)}\hat{\theta_i}(X).
	\end{eqnarray}
	This is an unbiased estimator of $\theta_i$ for $S^{(\alpha)}$.  Hence, the assertion follows from the first part of the proof. \begin{flushright}$\blacksquare$\end{flushright}
	\end{proof}
	
	When $\alpha = 1$, the inequality in (\ref{eq:analogous_cramer_rao_inequality}) reduces to the classical Cram\'{e}r-Rao inequality. In view of Remark \ref{rem:alpha_fisher} in section \ref{sec:geometry_of_alpha_relative_entropy}, we see that (\ref{eq:analogous_cramer_rao_inequality}) is, in fact, the Cram\'{e}r-Rao inequality for the $\alpha$-escort family $S^{(\alpha)}$.

We now introduce the two families of probability distributions that are widely known in the context of Tsallis' thermostatistics.

	\begin{definition}
		\label{defn:alpha-power-law}
		The {\em $\alpha$-power-law family}, $\mathbb{M}^{(\alpha)} := \mathbb{M}^{(\alpha)}(q, f, \Theta)$
		, characterized by a $q\in\mathcal{P}$ and $k$ real valued functions $f_i, i = 1,\dots, k$ on $\mathcal{X}$, and parameter space $\Theta\subset\mathbb{R}^k$, is defined by	$\mathbb{M}^{(\alpha)} = \{ p_\theta : \theta\in\Theta\}\subset\mathcal{P}$, where
		\begin{eqnarray}
		\label{eqn:power_law_family_formula}
		p_\theta(x) = Z(\theta)^{-1} \big[q(x)^{\alpha-1} + \theta^Tf(x)\big]^{\frac{1}{\alpha-1}}\quad\text{for }~x\in\mathcal{X},
		\end{eqnarray}
		and $Z(\theta)$ is the normalizing constant (c.f. \cite{kumar2015minimization-2}).
	\end{definition}
	
	\begin{definition}
		The {\em $\alpha$-exponential family}, $\mathscr{E}_\alpha := \mathscr{E}_\alpha(q, f, \Theta)$, characterized by a $q\in\mathcal{P}$ and $k$ real valued functions $f_i, i = 1,\dots, k$ on $\mathcal{X}$, and parameter space $\Theta\subset\mathbb{R}^k$, is given by $\mathscr{E}_\alpha = \{ p_\theta : \theta\in\Theta\}\subset\mathcal{P}$,
		where
		\begin{eqnarray}
		\label{eqn:E_alpha_family_formula}
		p_\theta(x) = Z(\theta)^{-1} \big[q(x)^{1-\alpha} + \theta^Tf(x)\big]^{\frac{1}{1-\alpha}}\quad\text{for }~x\in\mathcal{X},
		\end{eqnarray}
		and $Z(\theta)$ is the normalizing constant (c.f. \cite{201609TIT_KumSas}).
	\end{definition}

Observe that the $\alpha$-power-law family was motivated from the minimization problem of relative $\alpha$-entropy $\mathscr{I}_\alpha$ subject to linear constraints on the underlying probability distribution \cite{kumar2015minimization-1} which, in turn, is inspired by the R\'enyi (or Tsallis) maximum entropy principle (in extensive statistical physics \cite[Eq.~(11)]{1998xxPhyA_Tsa}). Similarly, the $\alpha$-exponential family arose from the minimization problem of R\'enyi divergence $D_\alpha$ subject to constraints on the $\alpha$-escort of the underlying probability distribution \cite{201609TIT_KumSas} (motivated by non-extensive statistical physics \cite[Eq.~(22)]{1998xxPhyA_Tsa}). Since $\mathscr{I}_\alpha(p,q) = D_{{1}/{\alpha}}(p^{(\alpha)},q^{(\alpha)})$, one expects a close relationship between the two families. Indeed, the two families are closely related by the transformation $p\mapsto p^{(\alpha)}$ as the following lemma shows.

\begin{lemma}
\label{lem:connection_power_law_families}
	If the statistical manifold $S$ is an $\mathbb{M}^{(\alpha)}$ family characterized by a $q\in\mathcal{P}$, $f_i, i = 1,\dots, k$ on $\mathcal{X}$, and $\Theta$ then $S^{(\alpha)}$ is an $\mathscr{E}_{1/\alpha}$ family characterized by $q^{(\alpha)}$, ${f_i}/{\|q\|^{\alpha-1}}, i = 1,\dots, k$ on $\mathcal{X}$, and $\Theta$. 
\end{lemma}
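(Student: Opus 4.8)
The plan is to compute the escort $p_\theta^{(\alpha)}$ of a generic member of $\mathbb{M}^{(\alpha)}(q,f,\Theta)$ directly from the definition \eqref{eqn:escort_distribution} and to recognize the result as a member of $\mathscr{E}_{1/\alpha}$ with the advertised parameters. First I would raise the defining expression \eqref{eqn:power_law_family_formula} to the power $\alpha$; since $p_\theta(x)^\alpha = Z(\theta)^{-\alpha}\big[q(x)^{\alpha-1}+\theta^Tf(x)\big]^{\alpha/(\alpha-1)}$, the factor $Z(\theta)^{-\alpha}$ is independent of $x$ and therefore cancels between numerator and denominator upon normalizing. This leaves
\[
p_\theta^{(\alpha)}(x) = \frac{\big[q(x)^{\alpha-1}+\theta^Tf(x)\big]^{\alpha/(\alpha-1)}}{\sum_y\big[q(y)^{\alpha-1}+\theta^Tf(y)\big]^{\alpha/(\alpha-1)}}.
\]

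The key observation is that the exponent $\alpha/(\alpha-1)$ is exactly $1/(1-1/\alpha)$, which is precisely the outer exponent appearing in the definition \eqref{eqn:E_alpha_family_formula} of the $\alpha$-exponential family evaluated at order $1/\alpha$. So it only remains to exhibit a base of the form $\tilde q(x)^{1-1/\alpha}+\theta^T\tilde f(x)$ that agrees, up to an $x$-independent multiplicative constant, with $q(x)^{\alpha-1}+\theta^Tf(x)$. Substituting $\tilde q = q^{(\alpha)}$ and writing $N := \sum_y q(y)^\alpha = \|q\|^\alpha$ (so $\|q\|:=(\sum_y q(y)^\alpha)^{1/\alpha}$), I would use $q^{(\alpha)}(x)=q(x)^\alpha/N$ to compute $\tilde q(x)^{1-1/\alpha}=q(x)^{\alpha-1}/N^{(\alpha-1)/\alpha}=q(x)^{\alpha-1}/\|q\|^{\alpha-1}$. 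Choosing $\tilde f = f/\|q\|^{\alpha-1}$ then pulls out the common factor $\|q\|^{-(\alpha-1)}$, so that $\tilde q(x)^{1-1/\alpha}+\theta^T\tilde f(x)=\|q\|^{-(\alpha-1)}\big[q(x)^{\alpha-1}+\theta^Tf(x)\big]$, with the same parameter $\theta$ ranging over $\Theta$.

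To finish, I would raise this identity to the power $\alpha/(\alpha-1)$ and observe that the prefactor becomes $\|q\|^{-\alpha}=N^{-1}$, again independent of $x$, so it cancels in the normalization. Consequently the normalized $\mathscr{E}_{1/\alpha}$ distribution built from $q^{(\alpha)}$ and $f/\|q\|^{\alpha-1}$ coincides termwise with the expression for $p_\theta^{(\alpha)}$ displayed above; in particular one need never evaluate the normalizing constant $Z$ of either family explicitly, since both sides are probability vectors with proportional unnormalized forms. Since this holds for every $\theta\in\Theta$, it establishes $S^{(\alpha)}=\mathscr{E}_{1/\alpha}(q^{(\alpha)},f/\|q\|^{\alpha-1},\Theta)$.

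The computation carries no conceptual obstacle; the only place where care is genuinely required is the exponent bookkeeping, namely tracking that $1-1/\alpha=(\alpha-1)/\alpha$ and that the power $N^{(\alpha-1)/\alpha}$ arising from the escort of $q$ matches the scaling $\|q\|^{\alpha-1}$ built into $\tilde f$, since a single slip in these fractional exponents would break the cancellation. I would therefore fix the convention for $\|q\|$ at the outset to keep the exponent arithmetic transparent.
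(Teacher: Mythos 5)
Your proposal is correct and follows essentially the same route as the paper's proof: raise $p_\theta$ to the power $\alpha$, normalize to get the escort, and factor out $\|q\|^{\alpha-1}$ (using $q^{(\alpha)}(x)^{1-1/\alpha} = q(x)^{\alpha-1}/\|q\|^{\alpha-1}$) to recognize the $\mathscr{E}_{1/\alpha}$ form with functions $f_i/\|q\|^{\alpha-1}$. The only cosmetic difference is that you cancel the $x$-independent constants ($Z(\theta)^{-\alpha}$ and $\|q\|^{-\alpha}$) immediately in the normalization, whereas the paper carries them explicitly in a prefactor $M(\theta)$; the exponent bookkeeping is identical.
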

\begin{proof}
Given an $\alpha$-power-law distribution
\begin{align}
p_{\theta}(x)=Z(\theta)^{-1}\Big[q(x)^{\alpha-1} + \sum\limits_{i=1}^k\theta_if_i(x)\Big]^{\frac{1}{\alpha-1}},
\end{align}
we have
\begin{align}
p_{\theta}(x)^\alpha &= Z(\theta)^{-\alpha}\Big[q(x)^{\alpha-1}+\sum\limits_{i=1}^k\theta_if_i(x)\Big]^{\frac{\alpha}{\alpha-1}}\nonumber\\
&= Z(\theta)^{-\alpha}\Big[q(x)^{\alpha-1}+\sum\limits_{i=1}^k\theta_if_i(x)\Big]^{\frac{1}{1-\frac{1}{\alpha}}}.
\end{align}
Therefore, after normalizing with $||p_\theta||^{\alpha}=\sum_x p_\theta(x)^\alpha$, we get the escort distribution
\begin{align}
\label{eq:escort1}
p_{\theta}^{(\alpha)}(x) &= \frac{Z(\theta)^{-\alpha}}{||p_\theta||^{\alpha}}\Big[q(x)^{\alpha-1}+\sum\limits_{i=1}^k\theta_if_i(x)\Big]^{\frac{1}{1-\frac{1}{\alpha}}}.
\end{align}

From (\ref{eq:escort1}),
\begin{align}
p_{\theta}^{(\alpha)}(x) &=\frac{Z(\theta)^{-\alpha}}{||p_\theta||^{\alpha}} \frac{1}{||q||^{-\alpha}||q||^{\alpha} } \Big[ q(x)^{\alpha-1} + \sum\limits_{i=1}^k\theta_i f_i(x)\Big]^{\frac{1}{1-\frac{1}{\alpha}}}\nonumber\\
&=\frac{Z(\theta)^{-\alpha}}{||p_\theta||^{\alpha}} \frac{1}{||q||^{-\alpha}} \left[\left( \frac{1}{||q||^{\alpha} }\right)^{{1-\frac{1}{\alpha}}} \left(q(x)^{\alpha-1} + \sum\limits_{i=1}^k\theta_i f_i(x) \right) \right]^{\frac{1}{1-\frac{1}{\alpha}}}\nonumber\\
&=\frac{Z(\theta)^{-\alpha}}{||p_\theta||^{\alpha} ||q||^{-\alpha}} \left[ \frac{1}{||q||^{\alpha-1} } \left(q(x)^{\alpha-1} + \sum\limits_{i=1}^k\theta_i f_i(x) \right) \right]^{\frac{1}{1-\frac{1}{\alpha}}}\nonumber\\
&=\frac{Z(\theta)^{-\alpha}}{||p_\theta||^{\alpha} ||q||^{-\alpha}} \left[  \frac{q(x)^{\alpha-1}}{||q||^{\alpha-1} } + \sum\limits_{i=1}^k  \theta_i\frac{ f_i(x)}{||q||^{\alpha-1} } \right]^{\frac{1}{1-\frac{1}{\alpha}}}\nonumber\\
&=\frac{Z(\theta)^{-\alpha}}{||p_\theta||^{\alpha} ||q||^{-\alpha}} \left[ \left( \frac{q(x)^{\alpha}}{||q||^{\alpha} }\right)^{\frac{\alpha-1}{\alpha}} + \sum\limits_{i=1}^k  \theta_i\frac{ f_i(x)}{||q||^{\alpha-1} } \right]^{\frac{1}{1-\frac{1}{\alpha}}}\nonumber\\
&=\frac{Z(\theta)^{-\alpha}}{||p_\theta||^{\alpha} ||q||^{-\alpha}}  \Bigg[ {q^{(\alpha)}(x)}^{1-\frac{1}{\alpha}} + \sum\limits_{i=1}^k \theta_i \frac{f_i(x)}{||q||^{\alpha-1} } \Bigg]^{\frac{1}{1-\frac{1}{\alpha}}}.\label{eq:escort3}
\end{align}

Let
\begin{align}
M(\theta) &\triangleq \frac{||p_\theta||^{\alpha} ||q||^{-\alpha}}{Z(\theta)^{-\alpha}}\\
h_i(x) &\triangleq \frac{f_i(x)}{||q||^{\alpha-1} }.
\end{align}
Then, rewriting the escort distribution yields
\begin{align}
\label{eq:escort4}
p_{\theta}^{(\alpha)}(x) = M(\theta)^{-1} \Big[ {q^{(\alpha)}(x)}^{1-\frac{1}{\alpha}} +\sum\limits_{i=1}^k \theta_i h_i(x) \Big]^{\frac{1}{1-\frac{1}{\alpha}}}.
\end{align}
This completes the proof. 
\begin{flushright}$\blacksquare$\end{flushright}
\end{proof}
	\vspace{-0.3cm}

Having established Lemma \ref{lem:connection_power_law_families}, one would expect that the two families are dual to each other with respect to the relative $\alpha$-entropic geometry. However, the answer is not in affirmative as proved in the following theorem. This example may be considered in the spirit of the duality mentioned in \cite[Sec. 3.3]{2000xxMIG_AmaNag}. 
\begin{theorem}
\label{thm:escort}
The ${1}/{\alpha}$-exponential family is not the dual of the $\mathbb{M}^{(\alpha)}$-family with respect to the metric induced by relative $\alpha$-entropy.
\end{theorem}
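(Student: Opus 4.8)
The plan is to read ``dual'' in the sense of Amari and Nagaoka \cite[Sec.~3.3]{2000xxMIG_AmaNag}: for $\mathscr{E}_{1/\alpha}$ to be the dual of $\mathbb{M}^{(\alpha)}$ with respect to $G^{(\alpha)}$, the two families must occupy the complementary flat slots of the dualistic structure $(G^{(\alpha)},\nabla^{(\alpha)},\nabla^{(\alpha)*})$, i.e. one is $\nabla^{(\alpha)}$-autoparallel and the other $\nabla^{(\alpha)*}$-autoparallel. I would therefore first pin down which slot $\mathbb{M}^{(\alpha)}$ occupies, and then reduce the whole theorem to the single assertion that $\mathscr{E}_{1/\alpha}$ fails to occupy the other one. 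Since both families are submanifolds of $\mathcal{P}$ carrying the relative $\alpha$-entropy geometry, the flatness test is well posed, and refuting duality amounts to exhibiting a nonvanishing embedding curvature.

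First I would confirm the ``expected half'', namely that $\mathbb{M}^{(\alpha)}$ is $\nabla^{(\alpha)*}$-autoparallel (the generalized-exponential slot, consistent with $\mathbb{M}^{(\alpha)}$ arising from minimizing $\mathscr{I}_\alpha$ over the first argument). Substituting the power-law form (\ref{eqn:power_law_family_formula}) into the inner bracket of (\ref{eqn:dual_connection_coefficients}) and (\ref{alpha-partial}), I write $p_{\theta'}(x)^{\alpha-1}=Z(\theta')^{-(\alpha-1)}\,a_{\theta'}(x)$ with $a_{\theta'}(x)=q(x)^{\alpha-1}+\sum_i\theta'_i f_i(x)$; the normalizer $Z(\theta')^{-(\alpha-1)}$ cancels between numerator and denominator, leaving $a_{\theta'}(x)/N$ with $N=\sum_y p_\theta(y)a_{\theta'}(y)$. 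Because $a_{\theta'}$ is genuinely affine in $\theta'$ (so $\partial_i'\partial_j'a_{\theta'}=0$ and $\partial_i'N=F_i:=E_{p_\theta}[f_i]$ is constant in $\theta'$), the second $\theta'$-derivative at $\theta'=\theta$ collapses into the first-order quantities: writing $\phi_k=\partial_k'(a_{\theta'}/N)|_{\theta'=\theta}$ one finds $\partial_i'\partial_j'(a_{\theta'}/N)|_{\theta'=\theta}=-\tfrac{F_j}{N}\phi_i-\tfrac{F_i}{N}\phi_j$, the $a/N$-piece cancelling exactly. Hence the $\alpha$-representation of $\nabla^{(\alpha)*}_{\partial_i}\partial_j$ is a linear combination of the $\partial_k^{(\alpha)}(p_\theta)$, so the $\nabla^{(\alpha)*}$-embedding curvature of $\mathbb{M}^{(\alpha)}$ vanishes, placing it in the $\nabla^{(\alpha)*}$-flat slot. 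The dual must then be $\nabla^{(\alpha)}$-flat, and it suffices to show $\mathscr{E}_{1/\alpha}$ is not.

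The heart of the argument is this last test. Using Lemma~\ref{lem:connection_power_law_families} and (\ref{eqn:E_alpha_family_formula}) with $\beta:=1-1/\alpha$, a member of $\mathscr{E}_{1/\alpha}$ satisfies $\log p_\theta=\beta^{-1}\log\!\big(q^{(\alpha)}(x)^{\beta}+\sum_i\theta_i h_i(x)\big)-\log M$, so $\partial_i\log p_\theta$ is proportional to $h_i/(q^{(\alpha)\beta}+\theta^{T}h)$, while the second-order object controlling $\nabla^{(\alpha)}_{\partial_i}\partial_j$ (whose natural $e$-representation contains $\partial_i\partial_j\log p_\theta+\partial_i\log p_\theta\,\partial_j\log p_\theta$) carries the bilinear, squared-denominator term $h_i h_j/(q^{(\alpha)\beta}+\theta^{T}h)^2$ with net coefficient $\tfrac{1-\beta}{\beta^{2}}=\tfrac{1/\alpha}{\beta^{2}}\neq0$ for $\alpha\neq1$. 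The tangent span $\mathrm{span}\{\partial_k\log p_\theta\}$, together with the constants produced by the escort centering in $\nabla^{(\alpha)}$, consists only of functions linear in the $h_k$ over a \emph{single} power of $q^{(\alpha)\beta}+\theta^{T}h$. A degree/linear-independence argument then shows the bilinear defect cannot be absorbed, so the $\nabla^{(\alpha)}$-embedding curvature of $\mathscr{E}_{1/\alpha}$ is nonzero and $\mathscr{E}_{1/\alpha}$ is not $\nabla^{(\alpha)}$-autoparallel; hence it is not the dual of $\mathbb{M}^{(\alpha)}$.

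The step I expect to be the obstacle is precisely this nonvanishing. Unlike the power-law computation, here neither the normalizer $M$ nor the escort factor $p_\theta^{(\alpha)}/p_\theta$ cancels, so the $e$-representation of $\nabla^{(\alpha)}_{\partial_i}\partial_j$ picks up several escort-weighted correction terms through the $\partial_i\big(\partial_k^{(\alpha)}(p_\theta)\big)$ piece of (\ref{eqn:connection_coefficients}), and one must verify that none of them conspires to cancel the bilinear defect (the escort power $p_\theta^{\alpha-1}\propto(q^{(\alpha)\beta}+\theta^{T}h)^{\alpha}$ does inject competing $(q^{(\alpha)\beta}+\theta^{T}h)$-powers, so the cancellation is not obvious a priori). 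To make the linear-independence claim fully rigorous I would specialize to the smallest nontrivial instance --- a one-parameter $\mathscr{E}_{1/\alpha}$ on the three-point alphabet $\X=\{0,1,2\}$ with explicit $q^{(\alpha)}$ and $h_1$ --- and simply evaluate the normal component of $\nabla^{(\alpha)}_{\partial_1}\partial_1$ at a single point, exhibiting a value $\neq0$ for $\alpha\neq1$. One such concrete counterexample already refutes the general duality, in keeping with the counterexample framing announced in the introduction.
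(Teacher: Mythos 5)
Your reading of ``dual'' is not the one the paper actually tests, and --- more importantly --- the step your whole argument hinges on is never proved. The paper's proof (Appendix A) tests duality in the dual-coordinate (Legendre) sense that also drives Theorem~\ref{thm:gen_crlb}: starting from Lemma~\ref{lem:connection_power_law_families}, it writes the score of the escort family as $\partial_i\log p_\theta^{(\alpha)}(x) = \widehat{\eta}_i(x) - \eta_i$, so that $\eta_i = E_{\theta^{(\alpha)}}[\widehat{\eta}_i(X)]$ are the candidate dual (expectation) parameters of $\theta$, and then computes directly, for an \emph{arbitrary} $\mathbb{M}^{(\alpha)}$ family, that $\partial_i \eta_j = \alpha\, g_{ij}^{(\alpha)}(\theta) - \frac{\alpha-1}{\alpha}\eta_i\eta_j$ (see (\ref{eq:derivative_eta1})), whereas duality would require $\partial_i\eta_j = g_{ij}^{(\alpha)}(\theta)$ as in (\ref{eqn:dual_equation}). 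The mismatch is explicit, holds uniformly over the family, and needs no ambient-connection machinery; for $\alpha\neq 1$ it cannot collapse to the dual relation, since that would force the positive-definite matrix $G^{(\alpha)}$ to coincide with the rank-one matrix $\big[\frac{1}{\alpha}\eta_i\eta_j\big]$. Your proposal instead formalizes ``dual'' as mutual autoparallelism with respect to $(\nabla^{(\alpha)},\nabla^{(\alpha)*})$; even if completed, that proves a different statement and does not recover the relation the paper cares about, namely the failure of $\partial_i\eta_j=g_{ij}^{(\alpha)}$, which is precisely what blocks the efficient-estimator construction of Theorem~\ref{thm:gen_crlb} for these models.

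The genuine gap is in your second half. Your first computation (for $\mathbb{M}^{(\alpha)}$, the second $\theta'$-derivative collapses onto the span of the $\partial_k^{(\alpha)}(p_\theta)$, so the $\nabla^{(\alpha)*}$-embedding curvature vanishes) does check out. But the decisive claim --- that the $\nabla^{(\alpha)}$-embedding curvature of $\mathscr{E}_{1/\alpha}$ is nonzero --- is only gestured at: the ``degree/linear-independence argument'' is never given, you yourself concede that the escort-weighted correction terms in (\ref{eqn:connection_coefficients}) inject competing powers of $\big(q^{(\alpha)}(x)^{1-1/\alpha}+\theta^T h(x)\big)$ so that ``the cancellation is not obvious a priori,'' and the fallback (evaluating the normal component of $\nabla^{(\alpha)}_{\partial_1}\partial_1$ on a three-point alphabet) is proposed but not executed. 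As it stands, you have established the half nobody doubts and left the counterexample --- the entire content of Theorem~\ref{thm:escort} --- unproved. If you want to salvage your route, either carry out that explicit evaluation, or adapt the paper's device: the functions $\widehat{\eta}_i - \eta_i$ already span the relevant tangent directions of the escort family, and the identity $\partial_i\eta_j = \alpha g_{ij}^{(\alpha)} - \frac{\alpha-1}{\alpha}\eta_i\eta_j$ quantifies exactly, and in closed form, how the would-be duality fails.
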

\begin{proof}
See Appendix A.
\begin{flushright}$\blacksquare$\end{flushright}
\end{proof}

\section{The General Framework and Applications}
	\label{sec:general_framework}
We apply the result in (\ref{eq:analogous_cramer_rao_inequality}) to a more general class of $f$-divergences. Observe from (\ref{eqn:alphadiv2}) that relative $\alpha$-entropy is a monotone function of an $f$-divergence not of the actual distributions but their $\alpha$-escort distributions.
Motivated by this, we first define a more general $f$-divergence and then show that these diveregnces also lead to generalized  Cram\'er-Rao lower bounds analogous to (\ref{eq:analogous_cramer_rao_inequality}). Although these divergences can be defined for positive measures, we restrict to probability measures here.
	
	\begin{definition}
	\label{defn:gen_f-divergence}
	Let $f$ be a strictly convex, twice continuously differentiable real valued function defined on $[0,\infty)$ with $f(1) = 0$ and $f''(1)\neq 0$. Let $F$ be a function that maps a probability distribution $p$ to another probability distribution $F(p)$. Then the {\em generalized $f$-divergence} between two probability distributions $p$ and $q$ is defined by
	\begin{equation}
	    \label{eqn:gen_f-divergence}
	    D_f^{(F)}(p,q) = \frac{1}{f''(1)}\cdot\sum_x F(q(x))f\left(\frac{F(p(x))}{F(q(x))}\right).
	\end{equation}
	\end{definition}
	
	Since $f$ is convex, by Jensen's inequality,
	\begin{eqnarray*}
	 D_f^{(F)}(p,q) & \ge & \frac{1}{f''(1)} \cdot f\left(\sum_x F(q(x))\cdot \frac{F(p(x))}{F(q(x))}\right)\\
	 & = & \frac{1}{f''(1)} \cdot f\left(\sum_x F(p(x))\right)\\
	 & = & \frac{1}{f''(1)}\cdot f(1)\\
	 & = & 0.
	\end{eqnarray*}
	Notice that, when $F(p(x)) = p(x)$, $D_f^{(F)}$ becomes the usual Csisz\'ar divergence. We now apply Eguchi's theory (see section \ref{sec:geometry_of_alpha_relative_entropy}) to $D_f^{(F)}$. The Riemannian metric on $S$ is specified by the matrix $G^{(f,F)}(\theta) = [g_{i,j}^{(f,F)}(\theta)]$, where
	\begin{eqnarray}
	\lefteqn{g_{i,j}^{(f,F)}(\theta) ~ := ~ g_{i,j}^{(D_f^{(F)})}(\theta)} \nonumber\\
	& = & -\frac{\partial}{\partial\theta_j'}\frac{\partial}{\partial\theta_i}D_f^{(F)}(p_{\theta},p_{\theta'})\bigg|_{\theta' = \theta} \nonumber\\
    & = & - \frac{\partial}{\partial\theta_j'}\frac{\partial}{\partial\theta_i} \sum_x F(p_{\theta'}(x))f\left(\frac{F(p_{\theta}(x))}{F(p_{\theta'}(x))}\right)\bigg|_{\theta' = \theta}\cdot \frac{1}{f''(1)}\nonumber\\
	& = & - \frac{\partial}{\partial\theta_j'}\left[\sum_x F(p_{\theta'}(x))f'\left(\frac{F(p_{\theta}(x))}{F(p_{\theta'}(x))}\right)\frac{F'(p_{\theta}(x))}{F(p_{\theta'}(x))}\partial_i p_{\theta}(x)\right]_{\theta' = \theta}\cdot \frac{1}{f''(1)}\nonumber\\
	& = & \left[\sum_x F'(p_{\theta}(x)) f''\left(\frac{F(p_{\theta}(x))}{F(p_{\theta'}(x))}\right)\frac{F(p_{\theta}(x))}{F(p_{\theta'}(x))^2}F'(p_{\theta'}(x))\partial_i p_{\theta}(x)\partial_j p_{\theta}(x)\right]_{\theta' = \theta}\cdot \frac{1}{f''(1)}\nonumber\\
	& = & \sum_x F(p_{\theta}(x))\cdot \partial_i \log F(p_{\theta}(x))\cdot \partial_j \log F(p _{\theta}(x))\nonumber\\
		\label{eqn:gen_metric}
	& = & E_{\theta^{(F)}}[\partial_i \log F(p_{\theta}(X))\cdot \partial_j \log F(p _{\theta}(X))],
	\end{eqnarray}
	 where $\theta^{(F)}$ stands for expectation with respect to the escort measure $F(p_\theta)$.
	 
	\begin{remark}
	Although the \textit{generalized} Csisz\'ar $f$-divergence is also a  Csisz\'ar$f$-divergence, it is not between $p$ and $q$. Rather, it is between the distributions $F(p)$ and $F(q)$. As a consequence, the metric induced by $D_f^{(F)}$ is different from the Fisher information metric, whereas the metric arising from all Csisz\'ar $f$-divergences is the Fisher information metric \cite{2010xxBPAS_AmaCic}.
	\end{remark}
	
	The following theorem extends the result in Theorem \ref{thm:alpha_CRLB} to a more general framework.
	
	\begin{theorem}[Generalized version of Cram\'{e}r-Rao inequality]
	\label{thm:gen_crlb}
	Let $\hat{\theta} = (\hat{\theta}_1,\dots,\hat{\theta}_m)$ be an unbiased estimator of $\theta = (\theta_1,\dots,\theta_m)$ for the statistical model $S$. Then there exists an unbiased estimator $\hat{\theta}^{F}$ of $\theta$ for the model $S^{(F)} = \{F(p) : p\in S\}$ such that $\text{Var}_{\theta^{(F)}}[\hat{\theta}^{(F)}(X)] - [G^{(f,F)}]^{-1}$ is positive semi-definite. Further, if $S$ is such that its escort model $S^{(F)}$ is exponential, then there exists efficient estimators for the escort model.
	\end{theorem}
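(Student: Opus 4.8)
The plan is to prove Theorem~\ref{thm:gen_crlb} in direct parallel with the converse half of Theorem~\ref{thm:alpha_CRLB}, resting on one structural observation already visible in the remark following~(\ref{eqn:gen_metric}): the matrix $G^{(f,F)}$ is nothing but the ordinary Fisher information matrix of the escort model $S^{(F)}=\{F(p_\theta):\theta\in\Theta\}$ expressed in the coordinates $\theta$. Indeed, since $F(p_\theta)$ is itself a probability distribution, $E_{\theta^{(F)}}[\partial_i\log F(p_\theta)]=\sum_x\partial_iF(p_\theta(x))=\partial_i 1=0$, so the right-hand side of~(\ref{eqn:gen_metric}) equals $\text{Cov}_{\theta^{(F)}}[\partial_i\log F(p_\theta),\partial_j\log F(p_\theta)]$, which is exactly the Fisher information of $S^{(F)}$. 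Hence the generalized Cram\'er--Rao inequality is just the classical Cram\'er--Rao inequality for the submanifold $S^{(F)}\subset\mathcal{P}$ with its ordinary Fisher metric, i.e. the $\alpha=1$ case of Theorem~\ref{thm:variance_and_norm_of_differential} and Corollary~\ref{cor:variance_and_norm_of_differential_inequality} applied to $S^{(F)}$.

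For the inequality itself I would, mirroring~(\ref{unbiased_estimator_alpha_version}), reweight the given unbiased estimator $\hat\theta$ for $S$ into a candidate estimator for the escort model,
\[
\hat\theta_i^{(F)}(X):=\frac{p_\theta(X)}{F(p_\theta(X))}\,\hat\theta_i(X),
\]
which is well defined because every distribution is supported on $\mathbb{X}$. Its unbiasedness for $S^{(F)}$ is immediate:
\[
E_{\theta^{(F)}}[\hat\theta_i^{(F)}]=\sum_xF(p_\theta(x))\,\frac{p_\theta(x)}{F(p_\theta(x))}\,\hat\theta_i(x)=E_{p_\theta}[\hat\theta_i]=\theta_i .
\]
Applying the classical Cram\'er--Rao inequality to $S^{(F)}$, whose Fisher matrix is $G^{(f,F)}$, then gives $\text{Var}_{\theta^{(F)}}[\hat\theta^{(F)}]-[G^{(f,F)}]^{-1}$ positive semi-definite, which is the first assertion. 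Throughout I will assume, as was done for $G^{(\alpha)}$, that $G^{(f,F)}$ is positive definite and that $\theta\mapsto F(p_\theta)$ is a smooth immersion, so that the inverse exists and the interchange of summation with differentiation in~(\ref{eqn:gen_metric}) is justified.

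For the efficiency claim I would use the equality case of the bound: by the $\alpha=1$ form of Corollary~\ref{cor:variance_and_norm_of_differential_inequality}, equality in the Cram\'er--Rao inequality for $S^{(F)}$ holds iff the centered statistic lies in the tangent space of $S^{(F)}$. When $S^{(F)}$ is exponential I would write $F(p_\theta)=\exp\big(\sum_i\vartheta_iT_i(x)-\psi(\vartheta)\big)$ in its natural coordinates $\vartheta$; the scores are then affine in the sufficient statistics $T_i$, so the centered $T_i$ span the tangent space, and $T=(T_1,\dots,T_m)$ is an unbiased, efficient estimator of the expectation parameters $\eta_i=E_{\theta^{(F)}}[T_i]$, attaining the bound with equality exactly as in Amari--Nagaoka's characterization of efficiency for exponential families. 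Transporting this estimator back through the change of coordinates between $\vartheta$ and $\theta$ yields the promised efficient estimator for the escort model.

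The step I expect to be the main obstacle is this last one. One must check that the reparametrization between the coordinates $\theta$ inherited from $S$ and the natural coordinates $\vartheta$ of the exponential family $S^{(F)}$ is a genuine diffeomorphism, that the efficient estimator of the expectation parameters pulls back to an unbiased estimator attaining the bound in the relevant coordinates, and that the equality condition ``centered statistic $\in T^{(F)}_{p_\theta}(S^{(F)})$'' is actually met --- which is precisely where the exponential (dually flat) structure of the escort must be exploited rather than merely assumed.
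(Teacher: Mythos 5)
Your proposal is correct and follows essentially the same route as the paper: the same reweighted estimator $\hat{\theta}_i^{(F)}(X)=\frac{p_\theta(X)}{F(p_\theta(X))}\hat{\theta}_i(X)$ for unbiasedness, the Cram\'er--Rao bound for the escort family with $G^{(f,F)}$ as its Fisher matrix (the paper obtains this by re-running Theorems~\ref{thm:variance_and_norm_of_differential}--\ref{thm:alpha_CRLB} and Corollary~\ref{cor:variance_and_norm_of_differential_inequality} for $D_f^{(F)}$, which amounts to your identification of $G^{(f,F)}$ with the Fisher information of $S^{(F)}$), and the exponential-family duality --- sufficient statistics as efficient estimators of the expectation parameters $\eta_i=\partial_i\psi(\theta)$ --- for the efficiency claim. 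The only substantive difference is that the paper assumes in \eqref{eqn:escort_model} that the escort model is exponential with the given $\theta$ as its natural parameter, thereby sidestepping the $\vartheta\leftrightarrow\theta$ reparametrization you flag as the main obstacle; under that reading your argument and the paper's coincide.
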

	
	\begin{proof}
	Following the same steps as in Theorems \ref{thm:variance_and_norm_of_differential}-\ref{thm:alpha_CRLB} and Corollary \ref{cor:variance_and_norm_of_differential_inequality} produces \begin{eqnarray}
	\label{eq:analogous_cramer_rao_inequality2}
	c\text{Var}_{\theta^{(F)}}[\widehat{\theta}^{(F)}]c^t \ge c[G^{(f,F)}]^{-1}c^t
	\end{eqnarray}
	 for an unbiased estimator $\widehat{\theta}^{(F)}$ of $\theta$ for $S^{(F)}$. This proves the first assertion of the theorem. Now let us suppose that $p_\theta$ is model such that
	\begin{equation}
	\label{eqn:escort_model}
	    \log F(p_\theta(x)) = c(x) + \sum_{i=1}^k \theta_i h_i(x) - \psi(\theta).
	\end{equation}
	Then
	\begin{equation}
	\label{eqn:derivative_escort_model}
	    \partial_i \log F(p _{\theta}(x)) = h_i(x) - \psi(\theta).
	\end{equation}
	Let
    \[	
	\widehat{\eta}(x) := h_i(x) \quad \text{and} \quad \eta := E_{\theta^{(F)}}[\widehat{\eta}(X)].
	\]
	Since $E_{\theta^{(F)}}[\partial_i \log F(p _{\theta}(X))] = 0$, we have
	\begin{equation}
	    \label{eqn:derivative_of_potential}
	    \partial_i \psi(\theta) = \eta_i.
	\end{equation}
	Hence
	\begin{equation}
	\label{eqn:gen_metric_dual_expression1}
	    g_{i,j}^{(f,F)}(\theta) = E_{\theta^{(F)}}[(\widehat{\eta}_i(X) - \eta_i)(\widehat{\eta}_j(X) - \eta_j)]
	\end{equation}
	Moreover, since
	\begin{eqnarray}
	\label{eqn:second_derivative}
	\partial_i \partial_j\log F(p _{\theta}(x)) & = & \partial_i\left[\frac{1}{F(p _{\theta}(x))}\partial_j \log F(p _{\theta}(x))\right]\nonumber\\
	& = & \frac{1}{F(p _{\theta}(x))}\partial_i \partial_j F(p _{\theta}(x)) - \frac{1}{F(p _{\theta}(x))^2}\partial_i F(p _{\theta}(x))\partial_j F(p _{\theta}(x))\nonumber\\
    & = & \frac{1}{F(p _{\theta}(x))}\partial_i \partial_j F(p _{\theta}(x)) - \partial_i\log F(p _{\theta}(x))\partial_j\log F(p _{\theta}(x)),
	\end{eqnarray}
from (\ref{eqn:gen_metric}), we have
\begin{equation}
	\label{eqn:gen_metric_dual_expression2}
	    g_{i,j}^{(f,F)}(\theta) = - E_{\theta^{(F)}}[\partial_i \partial_j\log F(p _{\theta}(X))].
	\end{equation}	
	Hence, from (\ref{eqn:derivative_escort_model}) and (\ref{eqn:derivative_of_potential}), we have
	\begin{equation}
	    \label{eqn:dual_equation}
	    \partial_i \eta_j = g_{i,j}^{(f,F)}(\theta).
	\end{equation}
	This implies that $\eta$ is dual to $\theta$. Hence the generalized Fisher information matrix of $\eta$ is equal to the inverse of the generalized Fisher information matrix of $\theta$. Thus from
	(\ref{eqn:gen_metric_dual_expression1}), $\widehat{\eta}$ is an efficient estimator of $\eta$ for the escort model. This further helps us to find efficient estimators for $\theta$ for the escort model. This completes the proof. \begin{flushright}$\blacksquare$\end{flushright}
	\end{proof}
	
	Theorem~\ref{thm:gen_crlb} generalizes the dually flat structure of exponential and linear families with respect to the Fisher metric identified by Amari and Nagoaka
	\cite[Sec.~3.5]{2000xxMIG_AmaNag} to other distributions (as specified by \eqref{eqn:escort_model}) and a more widely applicable metric (as in Definition~\ref{defn:gen_f-divergence}).

	\section{Discussion}
	\label{sec:disc}
	Here we discuss some of the earlier works that have some commonalities with our work.
	\begin{enumerate}
	    \item Jan Naudts suggests an alternative generalization of the usual Cram\'er-Rao inequality in the context of Tsallis' thermostatistics \cite[Eq.~(2.5)]{2004xxJIPAM_Jan}.
	Their inequality is closely analogous to ours. It enables us to find a bound for the variance of an estimator of the underlying model (\textit{with respect to the escort model}) in terms of a generalized Fisher information ($g_{k l}(\theta)$) involving both the underlying ($p_\theta$) and its escort families ($P_\theta$).
	Their Fisher information, when the escort is taken to be $P_\theta = p_\theta^{(\alpha)}$, is given by
	\begin{align*}
	g_{k,l}(\theta) = \sum_x \frac{1}{p_\theta^{(\alpha)}(x)} \partial_k p_\theta(x)\partial_l p_\theta(x).
	\end{align*}
	The same in our case is
	\begin{align*}
	g_{k,l}^{(\alpha)}(\theta) = \sum_x \frac{1}{p_\theta^{(\alpha)}(x)} \partial_k p_\theta^{(\alpha)}(x)\partial_l p_\theta^{(\alpha)}(x).
	\end{align*}
	Also, $\partial_i p_\theta^{(\alpha)}$ and $\partial_i p_\theta$ are related by 
	\begin{align*}
	\partial_i p_\theta^{(\alpha)}(x) = \partial_i\left(\frac{p_\theta(x)^\alpha}{\sum_y p_\theta(y)^\alpha}\right) = \alpha\left[\frac{{p_{\theta}^{(\alpha)}(x)}}{p_{\theta}(x)}\partial_i p_{\theta}(x) - p_{\theta}^{(\alpha)}(x) \sum_y \frac{{p_{\theta}^{(\alpha)}(y)}}{p_{\theta}(y)}\partial_i p_{\theta}(y)\right].
	\end{align*}
	Moreover, while theirs bounds the variance of an estimator of the \textit{true distribution} with respect to the escort distribution, ours bounds the variance of an estimator of the \textit{escort distribution itself}. Their result is precisely the following.
	
	\vspace{0.2cm}
\noindent
\textbf{Theorem 2.1 of Jan Naudts \cite{2004xxJIPAM_Jan}} \textit{Let be given two families of pdfs $\left(p_{\theta}\right)_{\theta \in D}$ and $\left(P_{\theta}\right)_{\theta \in D}$ and corresponding expectations $E_{\theta}$ and $F_{\theta} .$ Let c be an estimator of $\left(p_{\theta}\right)_{\theta \in D},$ with scale function $F$. Assume that the regularity condition
\begin{align*}
F_{\theta} \frac{1}{P_{\theta}(x)} \frac{\partial}{\partial \theta^{k}} p_{\theta}(x)=0,
\end{align*}
holds. Let $g_{k l}(\theta)$ be the information matrix introduced before. Then, for all u and v in $\mathbb{R}^{n}$ is
\begin{align*}
\frac{u^{k} u^{l}\left[F_{\theta} c_{k} c_{l}-\left(F_{\theta} c_{k}\right)\left(F_{\theta} c_{l}\right)\right]}{\left[u^k v^l\frac{\partial^2}{ \partial\theta^l\partial\theta^k} F(\theta)\right]^{2}} \geq \frac{1}{v^{k} v^{l} g_{k l}(\theta)}.
\end{align*}
}
\vspace{0.2cm}

\item Furuichi \cite{furuichi2009} defines a generalized Fisher information based on the $q$-logarithmic function and gives a bound for the variance of an estimator with respect to the escort distribution. Given a random variable $X$ with the probability density function $f(x)$, they define the $q$-score function $s_{q}(x)$ based on the $q$-logarithmic function and $q$-Fisher information $J_{q}(X) = E_{q}\left[s_{q}(X)^{2}\right]$,
where $E_{q}$ stands for expectation with respect to the escort distribution $f^{(q)}$ of $f$ as in \eqref{eqn:escort_distribution}. Observe that
\begin{align}
\label{eqn:furuchi_fisher}
    J_{q}(X) & = E_{q}\left[s_{q}(X)^{2}\right]\nonumber\\
    & = E_{q}\left[f(X)^{2-2q}\left(\frac{d}{dX}\log f(X)\right)^2\right],
\end{align}
whereas our Fisher information in this setup, following \eqref{eqn:g-alpha-expansion}, is
\begin{equation}
\label{eqn:q-fisher_information}
    g^{(q)}(X) = E_{q}\left[ \left(\frac{d}{dX}\log f(X)\right)^2\right] - \left(E_{q}\left[ \frac{d}{dX}\log f(X)\right]\right)^2,
\end{equation}
Interestingly, they also bound the variance of an estimator of the escort model with respect to the escort model itself as in our case. Their main result is the following.
\vspace{0.2cm}

\noindent
\textbf{Theorem 3.3 of Furuichi \cite{furuichi2009}}: \textit{Given the random variable $X$ with the probability density function $p(x)$, the $q$-expectation value $\mu_{q} = E_{q}[X]$, and the $q$-variance $\sigma_{q}^{2} = E_{q}\left[\left(X-\mu_{q}\right)^{2}\right]$, we have a $q$-Cramér-Rao inequality
\begin{align*}
J_{q}(X) \geq \frac{1}{\sigma_{q}^{2}}\left(\frac{2}{\int p(x)^{q} d x}-1\right) \quad \text { for } q \in[0,1) \cup(1,3).
\end{align*}
Immediately, we have
\begin{align*}
J_{q}(X) \geq \frac{1}{\sigma_{q}^{2}} \quad \text { for } q \in(1,3).
\end{align*}
}
\vspace{0.2cm}

\item Lutwak et al. \cite{200501TIT_LutYanZha} derives a Cram\'er-Rao inequality in connection with extending Stam's inequality for the generalized Gaussian densities. Their inequality finds lower bound for the $p$-th moment of the given density ($\sigma_p[f]$) in terms of a generalized Fisher information. Their Fisher information $\phi_{p, \lambda}[f]$, when specialised to $p=q=2$, is given by
\[
\phi_{2, \lambda}[f] = \left\{E\Big[f(X)^{2\lambda - 2}\Big(\frac{d}{dX}\log f(X)\Big)^2\Big]\right\}^{\frac{1}{2}},
\]
which is closely related to that of Furuichi's \eqref{eqn:furuchi_fisher} upto a change of measure $f\mapsto f^{(\lambda)}$, which, in turn, related to ours \eqref{eqn:q-fisher_information}. Moreover, while they use relative $\alpha$-entropy to derive their moment-entropy inequality, they do not do so while defining their Fisher information and hence obtain a different Cram\'er-Rao inequality. Their result is reproduced as follows.
\vspace{0.2cm}

\noindent
\textbf{Theorem 5 of Lutwak et al. \cite{200501TIT_LutYanZha}}: \textit{Let $p \in[1, \infty], \lambda \in(1 /(1+p), \infty),$ and $f$ be a density. If $p<\infty,$ then $f$ is assumed to be absolutely continuous; if $p=\infty,$ then $f^{\lambda}$ is assumed to have bounded variation. If $\sigma_p[f], \phi_{p, \lambda}[f]<\infty,$ then
\begin{align*}
\sigma_p[f]\phi_{p, \lambda}[f] \geq \sigma_p[G]\phi_{p, \lambda}[G],
\end{align*}
where $G$ is the generalized Gaussian density.
}
\vspace{0.2cm}

\item Bercher \cite{bercher2012generalized} derived a two parameter extension of Fisher information and a generalized Cram\'{e}r-Rao inequality which bounds the $\alpha$ moment of an estimator. Their Fisher information, when specialised to $\alpha = \beta = 2$, reduces to
\begin{align*}
I_{2, q}[f ; \theta] = E_q\left[\frac{f^{(q)}(X ; \theta)}{f(x ; \theta)} \left(\frac{\partial}{\partial \theta} \log f^{(q)}(X ; \theta)\right)^2\right],
\end{align*}
where $E_q$ stands for expectation with respect to the escort distribution $f^{(q)}$. Whereas, following \eqref{eqn:RiemannianOnS-alpha}, our Fisher information in this setup is
\begin{eqnarray*}
    g^{(q)}(\theta) = \frac{1}{q^2} E_{q}\left[\left(\frac{\partial}{\partial \theta} \log f^{(q)}(X ; \theta)\right)^2\right].
\end{eqnarray*}
Thus our Fisher information differs from his by the factor ${f(x;\theta)}/{q^2 f^{(q)}(x,\theta)}$ inside the expectation. Note that $q$ in their result is analogous to $\alpha$ in our work. The main result of Bercher \cite{bercher2012generalized} is reproduced verbatim as follows.
\vspace{0.2cm}

\noindent
\textbf{Theorem 1 of Bercher \cite{bercher2012generalized}}: \textit{Let $f(x; \theta)$ be a univariate probability density function defined over a subset $X$ of $\mathbb{R},$ and $\theta \in \Theta$ a parameter of the density. Assume that $f(x ; \theta)$ is a jointly measurable function of $x$ and $\theta$, is integrable with respect to $x$, is absolutely continuous with respect to $\theta,$ and that the derivative with respect to $\theta$ is locally integrable. Assume also that
$q>0$ and that $M_{q}[f ; \theta]$ is finite. For any estimator $\hat{\theta}(x)$ of $\theta,$ we have
\begin{align*}
E\left[|\hat{\theta}(x)-\theta|^{\alpha}\right]^{\frac{1}{\alpha}} I_{\beta, q}[f ; \theta]^{\frac{1}{\beta}} \geq\left|1+\frac{\partial}{\partial \theta} E_{q}[\hat{\theta}(x)-\theta]\right|,
\end{align*}
with $\alpha$ and $\beta$ H\"{o}lder conjugates of each other, i.e., $\alpha^{-1}+\beta^{-1}=1, \alpha \geq 1,$ and where the quantity
\begin{align*}
I_{\beta, q}[f ; \theta] = E\left[\left|\frac{f(x ; \theta)^{q-1}}{M_{q}[f ; \theta]} \frac{\partial}{\partial \theta} \ln \left(\frac{f(x ; \theta)^{q}}{M_{q}[f ; \theta]}\right)\right|^{\beta}\right],
\end{align*}
where $M_{q}[f ; \theta] := \int f(x ; \theta)^{q}~dx$, is the generalized Fisher information of order $(\beta, q)$ on the parameter $\theta .$ 
}
\vspace{0.2cm}

\item In our recent work \cite{mishra2020generalized}, we introduced the definition of Bayesian relative $\alpha$-entropy. As before, we define $S = \{p_\theta: \theta = (\theta_1,\dots,\theta_k)\in\Theta\}$ as a $k$-dimensional sub-manifold of $\mathcal{P}$ and
\begin{align}
\label{eqn:denormalized_manifold}
\tilde{S} := \{\tilde{p}_{\theta}(x) = p_{\theta}(x)\lambda(\theta) : p_{\theta}\in S\},
\end{align}
where $\lambda$ is a probability distribution on $\Theta$. Then, $\tilde{S}$ is a sub-manifold of $\tilde{\mathcal{P}}$. Let $\tilde{p}_\theta, \tilde{p}_{\theta'}\in \tilde{S}$. We defined relative $\alpha$-entropy of $\tilde{p}_\theta$ with respect to $\tilde{p}_{\theta'}$ as
\begin{align*}
 \mathcal{I}_{\alpha}(\tilde{p}_{\theta},\tilde{p}_{\theta'}) := &\frac{\lambda(\theta)}{1-\alpha}\log\sum_x p_\theta(x) (\lambda(\theta')p_{\theta'}(x))^{\alpha-1}+ \lambda(\theta')\nonumber\\
& - \lambda(\theta)\left[\frac{\log\sum_xp_\theta(x)^\alpha}{\alpha (1-\alpha)} - \{1 + \log \lambda(\theta)\} -\frac{1}{\alpha} \log\sum_xp_{\theta'}(x)^\alpha\right],
\end{align*}
which is a generalization of \eqref{eqn:alphadiv}. In \cite{mishra2020generalized}, we used this divergence to derive the Bayesian version of Theorem~\ref{thm:gen_crlb} proved in the previous section. The result in \cite{mishra2020generalized}, stated in the following theorem, reduces to deterministic $\alpha$-CRLB in the absence of prior, Bayesian CRLB when $\alpha \rightarrow 1$, and deterministic CRLB when $\alpha \rightarrow 1$ in the absence of prior.
\vspace{0.2cm}

\noindent
\textbf{Bayesian $\alpha$-Cram\'{e}r-Rao inequality \cite{mishra2020generalized}}:
	\label{thm:Bayesian_alpha_CRLB}
	\textit{Let $S = \{p_{\theta} : \theta = (\theta_1,\dots,\theta_m)\in\Theta\}$ be the given statistical model and let $\tilde{S}$ as in \eqref{eqn:denormalized_manifold}. Let $\hat{\theta} = (\hat{\theta}_1,\dots,\hat{\theta}_m)$ be an unbiased estimator of $\theta = (\theta_1,\dots,\theta_m)$ for the statistical model $S$. Then
	\begin{equation*}
	\label{eqn:Bayesian_alpha_cramerrao}
	   \int \text{Var}_{\theta^{(\alpha)}}\left[\frac{\tilde{p_\theta}(X)}{p_\theta^{(\alpha)}(X)}(\hat{\theta}(X) - \theta)\right] d\theta
	    \ge \left\{E_\lambda\big[G_\lambda^{(\alpha)}\big]\right\}^{-1},
	\end{equation*}
	where $\theta^{(\alpha)}$ denotes expectation with respect to $p_{\theta}^{(\alpha)}$.}
	
\end{enumerate}

	\section{Summary}
	\label{sec:summary}
	We studied the geometry of probability distributions with respect to a generalized version of the Csisz\'ar $f$-divergences under the Eguchi's framework. Amari and Nagoaka \cite{2000xxMIG_AmaNag} established the Cram\'er-Rao inequality from the information geometry of the usual relative entropy $\mathscr{I}(p,q)$. Following their procedure, we formulated an analogous inequality from the generalized Csisz\'ar $f$-divergences. This result, in fact, leads the usual Cram\'er-Rao inequality to its escort $F(p)$ by the transformation $p\mapsto F(p)$. However, this reduction is not coincidental here because the Riemannian metric derived from all Csisz\'ar $f$-divergences is the Fisher information metric and the divergence studied here is a Csisz\'ar $f$-divergence, not between $p$ and $q$, but between $F(p)$ and $F(q)$. 
	
	Nonetheless, the generalized version of the Cram\'er-Rao inequality enables us to find unbiased and efficient estimators for the escort of the underlying model. This theory when specified to relative $\alpha$-entropy, gives rise to an $\alpha$-CRLB for the $\alpha$-escort of the underlying distribution. This further elucidates the relation between the two important power-law families namely, $\alpha$-power-law and $\alpha$-exponential. Indeed, as a consequence of this general theory, we proved that neither of these families is a dual of the other with respect to the usual Fisher information metric or its $\alpha$-version. Such counterexamples are not available from the Amari-Nagaoka derivation of the CRLB.
	
	\appendix
	\section{Proof of Theorem \ref{thm:escort}}
	\label{app:proof}
	
	Taking log on both sides of (\ref{eq:escort4}),
\begin{eqnarray}
\label{eq:escort5}
\log p_{\theta}^{(\alpha)}(x) &= -\log M(\theta) + {\frac{1}{1-\frac{1}{\alpha}}}\log\Big[ {q^{(\alpha)}(x)}^{1-\frac{1}{\alpha}} + \sum\limits_{i=1}^k \theta_i h_i(x) \Big].
\end{eqnarray}
Partial derivative produces
\begin{eqnarray}
\partial_i \log p_{\theta}^{(\alpha)}(x) &= -\partial_i\log M(\theta) + {\frac{1}{1-\frac{1}{\alpha}}}\frac{h_i(x)}{ \Big[ {q^{(\alpha)}(x)}^{1-\frac{1}{\alpha}} +\sum\limits_{i=1}^k \theta_i h_i(x) \Big]},
\end{eqnarray}
or
\begin{eqnarray}
\label{eq:escort6}
\partial_i \log p_{\theta}^{(\alpha)}(x) &= -\partial_i\log M(\theta) + \frac{\alpha}{\alpha-1} \frac{f_i(x)}{ \Big[q(x)^{\alpha-1} + \sum\limits_{i=1}^k\theta_i f_i(x) \Big]}.
\end{eqnarray}
Taking expectation on both sides of (\ref{eq:escort6}), we obtain
\begin{eqnarray}
\label{eq:escort7}
E_{\theta^{(\alpha)}} \left[ \partial_i \log p_{\theta}^{(\alpha)}(x)\right] &= -\partial_i\log M(\theta) + \frac{\alpha}{\alpha-1} E_{\theta^{(\alpha)}} \Bigg[\frac{f_i(X)}{  q(X)^{\alpha-1} + \sum\limits_{i=1}^k\theta_i f_i(X)}\Bigg].
\end{eqnarray}
Since the expected value of the score function vanishes (left hand side of (\ref{eq:escort7})), we have
\begin{equation}
    \label{eq:escort8}
\partial_i\log M(\theta) = \frac{\alpha}{\alpha-1} E_{\theta^{(\alpha)}} \vast[\frac{f_i(X)}{  q(X)^{\alpha-1} + \sum\limits_{i=1}^k\theta_i f_i(X)}\vast].
\end{equation}

Substituting (\ref{eq:escort8}) into (\ref{eq:escort6}), we get
\begin{eqnarray}
\label{eq:escort9}
\partial_i \log p_{\theta}^{(\alpha)}(x) 
&=& \frac{\alpha}{\alpha-1}  \frac{f_i(x)}{ \Big[  q(x)^{\alpha-1} + \sum\limits_{i=1}^k\theta_i f_i(x) \Big]} - \frac{\alpha}{\alpha-1}   E_{\theta^{(\alpha)}} \vast[\frac{f_i(X)}{ q(X)^{\alpha-1} + \sum\limits_{i=1}^k\theta_i f_i(X) }\vast] \nonumber\\
& =: & \widehat{\eta}_i(x) - \eta_i,
\end{eqnarray}
where
\[
\widehat{\eta}_i(x) := \frac{\alpha }{\alpha-1}\frac{ f_i(x)}{ \Big[q(x)^{\alpha-1} + \sum\limits_{i=1}^k\theta_i f_i(x) \Big]} \text{ and } \eta_i := E_{\theta^{(\alpha)}}[\widehat{\eta}_i(X)]
\]
Moreover, (\ref{eq:escort8}) implies that $\log M(\theta)$ should be the potential (if exists).

The Riemmanian metric becomes
\begin{align}
\label{eqn:cramer-rao-equality-dual}
g_{ij}^{(\alpha)}(\theta) = \frac{1}{\alpha^2}E_{\theta^{(\alpha)}}[(\widehat{\eta}_i(X) - \eta_i)(\widehat{\eta}_j(X) - \eta_j)].
\end{align}
This further strengthens our expectation that the $\eta_i$'s are dual parameters to $\theta_i$'s. However, it is surprising that it is not so as we shall see now. We have
\begin{eqnarray}
\eta_j & = & \frac{\alpha }{\alpha-1}E_{\theta^{(\alpha)}}\vast[ \frac{f_j(X)}{  q(X)^{\alpha-1} + \sum\limits_{j=1}^k\theta_j f_j(X) }\vast]\nonumber\\
& = & \frac{\alpha }{\alpha-1}\sum\limits_x  \vast[\frac{f_j(x)}{  q(x)^{\alpha-1} + \sum\limits_{j=1}^k\theta_j f_j(x)}\vast] p_{\theta}^{(\alpha)}(x).
\end{eqnarray}
Let $R_{\theta}(x) = q(x)^{\alpha-1} + \sum\limits_{j=1}^k\theta_j f_j(x)$. Partial differentiation produces
\begin{eqnarray}
\label{eq:derivative_eta}
\frac{\partial \eta_j}{\partial \theta_i} 
&=& \frac{\alpha }{\alpha-1}\frac{\partial}{\partial \theta_i}\left(\sum_x \frac{f_j(x)p_{\theta}^{(\alpha)}(x)}{R_{\theta}(x)}\right)\nonumber\\
&=& \frac{\alpha }{\alpha-1}\sum_x  \frac{R_{\theta}(x)f_j(x) \partial_i p_{\theta}^{(\alpha)}(x) - f_j(x) p_{\theta}^{(\alpha)}(x) \partial_i(R_{\theta}(x))}{(R_{\theta}(x))^2}\nonumber\\
&=& \frac{\alpha }{\alpha-1}\sum_x \left[ \frac{f_j(x)}{R_{\theta }(x)}\partial_i p_{\theta}^{(\alpha)}(x) -  p_{\theta}^{(\alpha)}(x)\frac{f_j(x)}{R_{\theta}(x)} \frac{f_i(x)}{R_{\theta}(x)} \right].
\end{eqnarray}

From (\ref{eq:escort6}) - (\ref{eq:escort9}), we have 
\begin{eqnarray}
\label{eq:expr1}
\frac{\alpha}{\alpha-1}\frac{f_i(x)}{ R_{\theta }} &= \partial_i \log p_{\theta}^{(\alpha)}(x) + \eta_i.
\end{eqnarray}

Substituting (\ref{eq:expr1}) into (\ref{eq:derivative_eta}) gives
\begin{align}
\frac{\partial \eta_j}{\partial \theta_i} 
&= \frac{\alpha }{\alpha-1}\sum_x \bigg[ \partial_i p_{\theta}^{(\alpha)}(x) \frac{\alpha-1}{\alpha} (\partial_i \log p_{\theta}^{(\alpha)}(x) + \eta_i) \nonumber\\
&-  p_{\theta}^{(\alpha)}(x) \left(\frac{\alpha-1}{\alpha}\right)^2  (\partial_j \log p_{\theta}^{(\alpha)}(x) + \eta_j) (\partial_i \log p_{\theta}^{(\alpha)}(x) + \eta_i) \bigg]\nonumber\\
&= \sum_x \bigg[\partial_i p_{\theta}^{(\alpha)}(x) (\partial_i \log p_{\theta}^{(\alpha)}(x) - \eta_i)\nonumber\\
&-  p_{\theta}^{(\alpha)}(x) \left(\frac{\alpha-1}{\alpha}\right)  (\partial_j \log p_{\theta}^{(\alpha)}(x) + \eta_j) (\partial_i \log p_{\theta}^{(\alpha)}(x) + \eta_i)\bigg]\nonumber\\
&= \sum_x \bigg[\partial_i p_{\theta}^{(\alpha)}(x) \partial_i \log p_{\theta}^{(\alpha)}(x) +  \eta_i \partial_i p_{\theta}^{(\alpha)}(x)\nonumber\\
&-  \left(\frac{\alpha-1}{\alpha}\right) p_{\theta}^{(\alpha)}(x)  \bigg(\partial_j \log p_{\theta}^{(\alpha)}(x) \partial_i \log p_{\theta}^{(\alpha)}(x) + \eta_i \partial_i \log p_{\theta}^{(\alpha)}(x) \nonumber\\
&  +\eta_i \partial_j \log p_{\theta}^{(\alpha)}(x) + \eta_i \eta_j 
\bigg)\bigg]\nonumber
\end{align}
\begin{align}
&= \sum_x \bigg[\partial_i p_{\theta}^{(\alpha)}(x) \partial_i \log p_{\theta}^{(\alpha)}(x) + \cancelto{=0}{   \eta_i \partial_i p_{\theta}^{(\alpha)}(x)}\nonumber\\
&-  \left(\frac{\alpha-1}{\alpha}\right)\bigg(p_{\theta}^{(\alpha)}(x)\partial_j \log p_{\theta}^{(\alpha)}(x) \partial_i \log p_{\theta}^{(\alpha)}(x) + \cancelto{=0}{ \eta_j p_{\theta}^{(\alpha)}(x) \partial_i \log p_{\theta}^{(\alpha)}(x)} \nonumber\\
&+\cancelto{=0}{\eta_i p_{\theta}^{(\alpha)}(x)\partial_j \log p_{\theta}^{(\alpha)}(x)} +  \eta_i \eta_j p_{\theta}^{(\alpha)}(x) \bigg)\bigg]\nonumber\\
&= \sum_x \bigg[\partial_i p_{\theta}^{(\alpha)}(x) \partial_i \log p_{\theta}^{(\alpha)}(x) - p_{\theta}^{(\alpha)}(x) \frac{1}{p_{\theta}^{(\alpha)}(x)} \partial_j p_{\theta}^{(\alpha)}(x) \partial_i \log p_{\theta}^{(\alpha)}(x)  \nonumber\\ &+\frac{1}{\alpha}\bigg(p_{\theta}^{(\alpha)}(x)\partial_j \log p_{\theta}^{(\alpha)}(x) \partial_i \log p_{\theta}^{(\alpha)}(x)\bigg) - \left(\frac{\alpha-1}{\alpha}\right) \eta_i \eta_j p_{\theta}^{(\alpha)}(x) \bigg) \bigg]\nonumber\\
&= \sum_x \bigg[\frac{1}{\alpha}\bigg(p_{\theta}^{(\alpha)}(x)\partial_j \log p_{\theta}^{(\alpha)}(x) \partial_i \log p_{\theta}^{(\alpha)}(x)\bigg) - \left(\frac{\alpha-1}{\alpha}\right) \eta_i \eta_j p_{\theta}^{(\alpha)}(x) \bigg) \bigg]\nonumber\\
&= \alpha g_{ij}^{(\alpha)}(\theta) - \left(\frac{\alpha-1}{\alpha}\right) \eta_i \eta_j.\label{eq:derivative_eta1}
\end{align}

This shows that $\eta_i$ cannot be the dual parameters of $\theta_i$ for the statistical model $\mathbb{M}^{(\alpha)}$. This completes the proof. 

\begin{acknowledgements}
The authors are indebted to Prof. Rajesh Sundaresan of the Indian Institute of Science, Bengaluru for his helpful suggestions and discussions that improved the presentation of this material substantially.
\end{acknowledgements}

On behalf of all authors, the corresponding author states that there is no conflict of interest.
\bibliographystyle{spmpsci}      
\bibliography{main}   

\end{document}